\newtheorem{theorem}{Theorem}
\newtheorem{lemma}{Lemma}
\newcommand{\RLE}{\mathit{RLE}}
\newcommand{\Diff}{\mathit{diff}}
\newcommand{\Beg}{\mathit{beg}}
\newcommand{\End}{\mathit{end}}
\newcommand{\RLEBoundary}{\mathit{RLE\_Bound}}
\newcommand{\Succ}{\mathit{succ}}
\newcommand{\bp}{\mathrm{bp}}
\newcommand{\Parikh}{\mathcal{P}}
\title{
  Computing Abelian regularities on RLE strings
}
\author[1]{Shiho Sugimoto}
\author[2]{Naoki Noda}
\author[1]{Shunsuke Inenaga}
\author[1]{Hideo Bannai}
\author[1]{Masayuki Takeda}
\affil[1]{
  Department of Informatics, Kyushu University, Japan\\
   \{shiho.sugimoto, inenaga, bannai, takeda\}@inf.kyushu-u.ac.jp
  }
\affil[2]{
  Department of Physics, Kyushu University, Japan
}
\begin{document}
\maketitle

\begin{abstract}
  Two strings $x$ and $y$ are said to be Abelian equivalent
  if $x$ is a permutation of $y$, or vice versa.
  If a string $z$ satisfies $z = xy$ with $x$ and $y$
  being Abelian equivalent,
  then $z$ is said to be an \emph{Abelian square}.
  If a string $w$ can be factorized into a sequence $v_1, \ldots, v_s$
  of strings
  such that $v_1$, \ldots, $v_{s-1}$ are all Abelian equivalent
  and $v_s$ is a substring of a permutation of $v_1$,
  then $w$ is said to have a \emph{regular Abelian period} $(p, t)$
  where $p = |v_1|$ and $t = |v_s|$.
  If a substring $w_1[i..i+\ell-1]$ of a string $w_1$ and
  a substring $w_2[j..j+\ell-1]$ of another string $w_2$ are Abelian equivalent,
  then the substrings are said to be a common Abelian factor of $w_1$ and $w_2$
  and if the length $\ell$ is the maximum of such then the substrings are
  said to be a \emph{longest common Abelian factor} of $w_1$ and $w_2$.
  We propose efficient algorithms which compute these Abelian regularities 
  using the \emph{run length encoding (RLE)} of strings.
  For a given string $w$ of length $n$ whose RLE is of size $m$,
  we propose algorithms which compute all Abelian squares occurring in $w$
  in $O(mn)$ time,
  and all regular Abelian periods of $w$ in $O(mn)$ time.
  For two given strings $w_1$ and $w_2$ of total length $n$
  and of total RLE size $m$,
  we propose an algorithm which computes
  all longest common Abelian factors in $O(m^2n)$ time.
\end{abstract}

\section{Introduction}

Two strings $s_1$ and $s_2$ are said to be \emph{Abelian equivalent}
if $s_1$ is a permutation of $s_2$, or vice versa.
For instance, strings $ababaac$ and $caaabba$ are Abelian equivalent.
Since the seminal paper by Erd{\H o}s~\cite{Erdos61} published in 1961,
the study of Abelian equivalence on strings has attracted much attention,
both in word combinatorics and string algorithmics.

\subsection{Our problems and previous results}

In this paper, we are interested in the following algorithmic problems
related to Abelian regularities of strings.
\begin{enumerate}
 \item \label{prob:1} Compute \emph{Abelian squares} in a given string.
 \item \label{prob:2} Compute \emph{regular Abelian periods} of a given string.
 \item \label{prob:3} Compute \emph{longest common Abelian factors} of two given strings.
\end{enumerate}

Cummings and Smyth~\cite{Cummings_weakrepetitions}
proposed an $O(n^2)$-time algorithm to solve Problem~\ref{prob:1},
where $n$ is the length of the given string.
Crochemore et al.~\cite{crochemore13:_abelian} proposed
an alternative $O(n^2)$-time solution to the same problem.
Recently, Kociumaka et al.~\cite{KociumakaRW15} showed
how to compute all Abelian squares in $O(s + \frac{n^2}{\log^2 n})$ time,
where $s$ is the number of outputs.

Related to Problem~\ref{prob:2},
various kinds of Abelian periods of strings have been considered:
An integer $p$ is said to be a \emph{full Abelian period} of a string $w$
iff there is a decomposition $u_1, \ldots, u_z$ of $w$
such that $|u_i| = p$ for all $1 \leq i \leq z$
and $u_1, \ldots, u_z$ are all Abelian equivalent.
A pair $(p, t)$ of integers is said to be a \emph{regular Abelian period}
(or simply an \emph{Abelian period}) of a string $w$
iff there is a decomposition $v_1, \ldots, v_s$ of $w$
such that $p$ is a full Abelian period of $v_1 \cdots v_{s-1}$,
$|v_i| = p$ for all $1 \leq i \leq s-1$,
and $v_s$ is a permutation of a substring of $v_1$ (and hence $t \leq p$).
A triple $(h, p, t)$ of integers is said to be a \emph{weak Abelian period} of
a string $w$ iff there is a decomposition $y_1, \ldots, y_r$ of $w$
such that $(p, t)$ is an Abelian period of $y_2 \cdots y_{r}$,
$|y_1| = h$, $|y_i| = p$ for all $2 \leq i \leq r-1$, $|y_r| = t$,
and $y_1$ is a permutation of a substring of $y_2$ (and hence $h \leq p$).

The study on Abelian periodicity of strings was initiated by
Constantinescu and Ilie~\cite{ConstantinescuI06}.
Fici et al.~\cite{FiciLLPS16} gave
an $O(n \log \log n)$-time algorithm
to compute all full Abelian periods.
Later, Kociumaka et al.~\cite{DBLP:conf/stacs/KociumakaRR13} showed
an optimal $O(n)$-time algorithm to compute all full Abelian periods.

Fici et al.~\cite{FiciLLPS16} also showed
an $O(n^2)$-time algorithm to compute all regular Abelian periods
for a given string of length $n$.
Kociumaka et al.~\cite{DBLP:conf/stacs/KociumakaRR13}
also developed an algorithm which finds all regular Abelian periods
in $O(n(\log\log n + \log \sigma))$ time,
where $\sigma$ is the alphabet size.

Fici et al.~\cite{FiciLLP14} proposed an algorithm
which computes all weak Abelian periods in $O(\sigma n^2)$ time,
and later Crochemore et al.~\cite{crochemore13:_abelian}
proposed an improved $O(n^2)$-time algorithm to compute
all weak Abelian periods.
Kociumaka et al.~\cite{KociumakaRW15} showed how to compute
all \emph{shortest} weak Abelian periods in $O(n^2/\sqrt{\log n})$ time.

Consider two strings $w_1$ and $w_2$.
A pair $(s_1, s_2)$ of a substring $s_1$ of $w_1$
and a substring $s_2$ of $w_2$ is said to be a \emph{common Abelian factor}
of $w_1$ and $w_2$,
iff $s_1$ and $s_2$ are Abelian equivalent.
Alatabbi et al.~\cite{alatabbi:_algor_longes_common_abelian_factor}
proposed an $O(\sigma n^2)$-time and $O(\sigma n)$-space algorithm
to solve Problem~\ref{prob:3} of computing
all \emph{longest common Abelian factors} (\emph{LCAFs}) of 
two given strings of total length $n$.
Later, Grabowski~\cite{Grabowski15} showed an algorithm
which finds all LCAFs in $O(\sigma n^2)$ time with $O(n)$ space.
He also presented an $O((\frac{\sigma}{k} + \log \sigma) n^2 \log n)$-time
$O(kn)$-space algorithm for a parameter $k \leq \frac{\sigma}{\log \sigma}$.
Recently, Badkobeh et al.~\cite{BadkobehGGNPS16} proposed
an $O(n \log^2 n \log^* n)$-time $O(n \log^2 n)$-space algorithm
for finding all LCAFs.

\subsection{Our contribution}

In this paper, we show that we can accelerate computation of
Abelian regularities of strings via \emph{run length encoding} (\emph{RLE})
of strings.
Namely, if $m$ is the size of the RLE of a given string $w$ of length $n$,
we show that:
\begin{enumerate}
  \item[(1)] All Abelian squares in $w$ can be computed in $O(mn)$ time.
  \item[(2)] All regular Abelian periods of $w$ can be computed in $O(mn)$ time.
\end{enumerate}
Since $m \leq n$ always holds,
solution (1) is at least as efficient as 
the $O(n^2)$-time solutions by Cummings and Smyth~\cite{Cummings_weakrepetitions}
and by Crochemore et al.~\cite{crochemore13:_abelian},
and can be much faster when the input string $w$ is highly compressible by RLE.

Amir et al.~\cite{AmirAHLLR14} proposed an $O(\sigma(m^2+n))$-time
algorithm to compute all Abelian squares using RLEs.
Our $O(mn)$-time solution is faster than theirs when
$\frac{\sigma m^2}{m-\sigma} = \omega(n)$.

Solution (2) is faster than the $O(n(\log\log n + \log \sigma))$-time
solution by Kociumaka et al.~\cite{DBLP:conf/stacs/KociumakaRR13}
for highly RLE-compressible strings with $\log \log n = \omega(m)$\footnote{Since we can w.l.o.g. assume that $\sigma \leq m$, the $\log \sigma$ term is negligible here.}.

Also, if $m$ is the total size of the RLEs of two given strings $w_1$ and $w_2$
of total length $n$, we show that:
\begin{enumerate}
  \item[(3)] All longest common Abelian factors of $w_1$ and $w_2$
    can be computed in $O(m^2n)$ time.
\end{enumerate}
Our solution (3) is faster than the $O(\sigma n^2)$-time solution
by Grabowski~\cite{Grabowski15} when $\sigma n = \omega(m^2)$,
and is faster than the fastest variant of the other solution by Grabowski~\cite{Grabowski15} (choosing $k = \frac{\sigma}{\log \sigma}$) when $\sqrt{n \log n \log \sigma} = \omega(m)$.
Also, solution (3) is faster than the $O(n \log^2 n \log^* n)$-time
solution by Badkobeh et al.~\cite{BadkobehGGNPS16} when $\log n \sqrt{\log^* n} = \omega(m)$.
The time bounds of our algorithms are all deterministic.

\section{Preliminaries}

Let $\Sigma = \{c_1, \ldots, c_\sigma\}$ be an ordered alphabet
of size $\sigma$.
An element of $\Sigma^*$ is called a string.
For any string $w$, $|w|$ denotes the length of $w$.
The empty string is denoted by $\varepsilon$.
Let $\Sigma^{+} = \Sigma^{*}-\{\varepsilon\}$.
For any $1 \leq i \leq |w|$,
$w[i]$ denotes the $i$-th symbol of $w$.
For a string $w=xyz$, strings $x$, $y$, and $z$ are called 
a \emph{prefix}, \emph{substring}, and \emph{suffix} of $w$, respectively.
The substring of $w$ that begins at position $i$ and ends at position $j$ 
is denoted by $w[i..j]$ for $1 \leq i \leq j \leq |w|$.
For convenience, let $w[i..j] = \varepsilon$ for $i > j$.

For any string $w\in\Sigma^*$, 
its \emph{Parikh vector} $\Parikh_w$ is an array of length $\sigma$
such that for any $1 \leq i \leq |\Sigma|$,
$\Parikh_w[i]$ is the number of occurrences of each character $c_i\in \Sigma$ in $w$.
For example, for string $w = abaab$ 
over alphabet $\Sigma = \{a, b\}$,
$\Parikh_w = \langle 3,2 \rangle$.
We say that strings $x$ and $y$ are \emph{Abelian equivalent}
if $\Parikh_x = \Parikh_y$.
Note that $\Parikh_x = \Parikh_y$
iff $x$ and $y$ are permutations of each other.
When $x$ is a substring of a permutation of $y$, 
then we write $\Parikh_x \subseteq \Parikh_y$.
For any Parikh vectors $P$ and $Q$,
let $\Diff(P, Q) = |\{i \mid P[i] \neq Q[i], 1 \leq i \leq \sigma\}|$.

A string $w$ of length $2k > 0$ is called an \emph{Abelian square}
if it is a concatenation of two Abelian equivalent strings of length $k$ each,
i.e., $\Parikh_{w[1..k]} = \Parikh_{w[k+1..2k]}$.
A string $w$ is said to have a \emph{regular Abelian period} $(p, t)$
if $w$ can be factorized into a sequence $v_1, \ldots, v_s$ of substrings
such that $p = |v_1| = \cdots = |v_{s-1}|$, $|v_s| = t$,
$\Parikh_{v_i} = \Parikh_{v_1}$ for all $2 \leq i < s$,
and $\Parikh_{v_s} \subseteq \Parikh_{v_1}$.

For any strings $w_1, w_2 \in \Sigma^*$,
if a substring $w_1[i..i+l-1]$ of $w_1$ and
a substring $w_2[j..j+l-1]$ of $w_2$ are Abelian equivalent,
then the pair of substrings is said to be
a \emph{common Abelian factor} of $w_1$ and $w_2$.
When the length $l$ is the maximum of such then
the pair of substrings is said to be a
\emph {longest common Abelian factor} of $w_1$ and $w_2$.

The \emph{run length encoding} (\emph{RLE}) of string $w$ of length $n$,
denoted $\RLE(w)$, is 
a compact representation of $w$ which encodes 
each maximal character run $w[i..i+p-1]$ by $a^p$, 
if (1) $w[j] = a$ for all $i \leq j \leq i+p-1$,
(2) $w[i-1] \neq w[i]$ or $i = 1$, and (3) $w[i+p-1] \neq w[i+p]$ or $i+p-1 = n$.
E.g., $\RLE(aabbbbcccaaa\$) = a^2b^4c^3a^3\$^1$.
The {\em size} of $\RLE(w) = a_1^{p_1} \cdots a_m^{p_m}$ is the number $m$ of maximal character runs in $w$,
and each $a_i^{p_i}$ is called an \emph{RLE factor} of $\RLE(w)$.
Notice that $m \leq n$ always holds.
Also, since at most $m$ distinct characters can appear in $w$,
in what follows we will assume that $\sigma \leq m$.
Even if the underlying alphabet is large,
we can sort the characters appearing in $w$ in $O(m \log m)$ time
and use this ordering in Parikh vectors.
Since all of our algorithms will require at least $O(mn)$ time,
this $O(m \log m)$-time preprocessing is negligible.

For any $1 \leq i \leq j \leq n$,
let $\RLE(w)[i..j] = a^{p_i}_{i} \cdots a^{p_j}_{j}$.
For convenience let $\RLE(w)[i..j] = \varepsilon$ for $i > j$.
For $\RLE(w) = a_1^{p_1} \cdots a_m^{p_m}$,
let $\RLEBoundary(w) = \{1 + \sum_{i=1}^{k}p_k \mid 1 \leq k < m\} \cup \{1, n\}$.
For any $1 \leq i \leq n$,
let $\Succ(i) = \min\{j \in \RLEBoundary(w) \mid j > i \}$.
Namely, $\Succ(i)$ is the smallest position in $w$ 
that is greater than $i$
and is either the beginning position of an RLE factor in $w$
or the last position $n$ in $w$.

\section{Computing regular Abelian periods using RLEs}
\label{sec:Abelian_periods}

In this section, we propose an algorithm which computes
all regular Abelian periods of a given string.

\begin{theorem} \label{theo:Abelian_periods}
  Given a string $w$ of length $n$ over an alphabet of size $\sigma$,
  we can compute all regular Abelian periods of $w$ in $O(mn)$ time
  and $O(n)$ working space, where $m$ is the size of $\RLE(w)$.
\end{theorem}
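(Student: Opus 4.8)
The plan is to test each candidate period $p \in \{1,\ldots,n\}$ separately and, for each one, decide in $O(m)$ time whether the factorization of $w$ into $s-1 = \lceil n/p\rceil - 1$ blocks $v_i = w[(i-1)p+1..ip]$ of length $p$ followed by a tail $v_s$ of length $t = n-(s-1)p \le p$ witnesses a regular Abelian period, i.e. whether $\Parikh_{v_i} = \Parikh_{v_1}$ for all full blocks and $\Parikh_{v_s} \subseteq \Parikh_{v_1}$. Because this factorization is forced by $p$, there are only $O(n)$ candidates, so an $O(m)$ test per candidate yields the claimed $O(mn)$ time. As a preprocessing step I would build $\RLE(w)$ together with the prefix sums of its run lengths, so that the RLE factor containing any position, the set $\RLEBoundary(w)$, and $\Succ(i)$ are all available in $O(1)$; this uses $O(m)$ space, and together with a running Parikh vector of length $\sigma \le m$ the whole computation stays within $O(n)$ working space. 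The Parikh vector $\Parikh_{w[l..r]}$ of any block is then assembled by walking the RLE factors it overlaps, in time proportional to their number.

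The observation that makes one period cost only $O(m)$ is the following. For a fixed $p$, call a block \emph{uniform} if it lies inside a single RLE factor and \emph{non-uniform} otherwise. A block is non-uniform exactly when it strictly contains an internal element of $\RLEBoundary(w)$, and since each of the $m-1$ internal RLE boundaries lies in at most one block, at most $m-1$ blocks are non-uniform. I would split into two cases according to $\Parikh_{v_1}$. If $v_1$ is uniform, i.e. $\Parikh_{v_1}$ is concentrated on a single character $a$ (equivalently the first run has length at least $p$), then every full block must also equal $a^p$, which holds iff the first run already covers $w[1..(s-1)p]$; this is an $O(1)$ test, after which $\Parikh_{v_s} \subseteq \Parikh_{v_1}$ is checked in $O(m)$. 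If $v_1$ is non-uniform, then $\Parikh_{v_1}$ has at least two nonzero coordinates, so any uniform full block would violate $\Parikh_{v_i} = \Parikh_{v_1}$; hence a valid period forces all $s-1$ full blocks to be non-uniform, and since there are at most $m-1$ non-uniform blocks I may reject in $O(1)$ whenever $s-1 > m-1$.

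In the remaining case $s-1 \le m-1$ I scan all full blocks explicitly. Assembling their Parikh vectors costs $O(m)$ in total, since the full blocks overlap only $O(m)$ RLE factors (using $s \le m$), and each comparison $\Parikh_{v_i} = \Parikh_{v_1}$ can be carried out in time proportional to the number of distinct characters in $v_i$ — it suffices to check that each character occurring in $v_i$ has the matching count in $\Parikh_{v_1}$ and that the two totals (both $p$) agree — which again sums to $O(m)$. The tail test $\Parikh_{v_s} \subseteq \Parikh_{v_1}$ is likewise performed in $O(m)$. Summing the worst-case $O(m)$ cost over the $n$ candidate periods gives $O(mn)$ time, and the data structures above give $O(n)$ space.

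The main obstacle is exactly this $O(m)$-per-period bound: a naive scan of all $\Theta(n/p)$ blocks would cost $\sum_p O(n/p) = O(n\log n)$, which exceeds $O(mn)$ precisely in the highly RLE-compressible regime the theorem targets (e.g. $w = a^n$ with $m=1$). The dichotomy above removes this overhead, because whenever $p$ is so small that a non-monochromatic $\Parikh_{v_1}$ would require more than $m-1$ equal full blocks, the candidate is discarded in constant time, and the full $O(m)$ scan is paid only when $s-1 \le m-1$. The crux of the correctness argument is therefore the structural claim that at most $m-1$ blocks are non-uniform and that a non-monochromatic $\Parikh_{v_1}$ forbids uniform full blocks; once this is verified, the time and space bounds follow by summation over the $n$ candidate periods.
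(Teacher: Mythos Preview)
Your proposal is correct and follows essentially the same approach as the paper: both arguments hinge on the observation that a block contained in a single RLE factor forces rejection (unless $w$ is unary), and that the remaining ``non-uniform'' blocks number at most $m-1$, so the Parikh-vector scan costs $O(m)$ per candidate period. The only differences are organizational---you branch on whether $v_1$ is uniform and reject via the explicit count $s-1>m-1$, whereas the paper branches on whether $w$ is unary and detects a uniform block by a left-to-right scan using a precomputed length-$n$ array $S[j]=\Succ(j)$---but the underlying counting argument and time analysis are the same.
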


\begin{proof}
Our algorithm is very simple.
We use a single window
for each length $d = 1, \ldots, \lfloor \frac{n}{2} \rfloor$.
For an arbitrarily fixed $d$,
consider a decomposition $v_1, \ldots, v_s$ of $w$
such that $v_i = w[(i-1)d+1..id]$ for
$1 \leq i \leq \lfloor \frac{n}{d} \rfloor$
and $v_s = w[n-(n \bmod d)+1..n]$.
Each $v_i$ is called a block,
and each block of length $d$ is called a complete block.

There are two cases to consider.

\noindent \textbf{Case (a)}: If $w$ is a unary string (i.e., $\RLE(w) = a^n$ for some $a \in \Sigma$).
In this case, 
$(d, (n \bmod d))$ is a regular Abelian period of $w$ for any $d$.
Also, note that this is the only case where 
$(d, (n \bmod d))$ can be a regular Abelian period of any string
of length $n$ with $\RLE(v_i) = a^d$ for some complete block $v_i$.
Clearly, it takes a total of $O(n)$ time and $O(1)$ space in this case.  

\noindent \textbf{Case (b)}: If $w$ contains at least two distinct characters,
then observe that a complete block $v_i$ is 
fully contained in a single RLE factor
iff $\Succ(1+(i-1)d) = \Succ(id)$.
Let $S$ be an array of length $n$ such that 
$S[j] = \Succ(j)$ for each $1 \leq j \leq n$.
We precompute this array $S$ in $O(n)$ time by a simple 
left-to-right scan over $w$.
Using the precomputed array $S$,
we can check in $O(m)$ time if there exists a complete block $v_i$ 
satisfying $\Succ(1+(i-1)d) = \Succ(id)$;
we process each complete block $v_i$ in increasing order of $i$
(from left to right), and stop 
as soon as we find the first complete block $v_i$
with $\Succ(1+(i-1)d) = \Succ(id)$.
If there exists such a complete block,
then we can immediately determine that $(d, (n \bmod d))$
is not a regular Abelian period (recall also Case (a) above.)

Assume every complete block $v_i$ overlaps at least two RLE factors.
For each $v_i$, let $m_i \geq 2$ be the number of
RLE factors of $\RLE(w)$ that $v_i$ overlaps
(i.e., $m_i$ is the size of $\RLE(v_i)$).
We can compute $\Parikh_{v_i}$ in $O(m_i)$ time from $\RLE(v_i)$,
using the exponents of the elements of $\RLE(v_i)$.
We can compare $\Parikh_{v_i}$ and $\Parikh_{v_{i-1}}$ in $O(m_i)$ time,
since there can be at most $m_i$ distinct
characters in $v_i$ and hence it is enough to check the $m_i$ entries
of the Parikh vectors.
Since there are $\lfloor \frac{n}{d} \rfloor$ complete blocks
and each complete block overlaps more than one RLE factor,
we have $\lfloor \frac{n}{d} \rfloor \leq \frac{1}{2}\sum_{i=1}^{s-1}m_i$.
Moreover, 
since each RLE factor is counted by a unique $m_i$
or by a unique pair of $m_{i-1}$ and $m_{i}$ for some $i$,
we have $\sum_{i=1}^sm_i \leq 2m$.
Overall, it takes $O(\sigma + \frac{n}{d} + \sum_{i=1}^{s}m_i) = O(m)$
time to test if
$(d, (n \bmod d))$ is a regular Abelian period of $w$.
Consequently, it takes $O(mn)$ total time to compute
all regular Abelian periods of $w$ for all $d$'s in this case.
Since we use the array $S$ of length $n$
and we maintain two Parikh vectors of the two adjacent $v_{i-1}$ and $v_{i}$
for each $i$,
the space requirement is $O(\sigma + n) = O(n)$.
\end{proof}

\begin{figure}[tb]
\centerline{
\includegraphics[width=6cm]{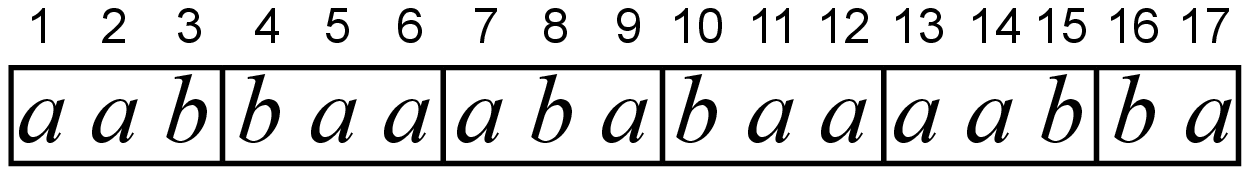}
}
\caption{$(3, 2)$ is a regular Abelian period of string
  $w = aabbaaababaaaabbaa$ 
  since $\Parikh_{w[1..3]} = \Parikh_{w[4..6]} = \Parikh_{w[7..9]} = \Parikh_{w[10..12]} = \Parikh_{w[13..15]} \supset \Parikh_{w[16..17]}$.}
\label{fig:period}
\end{figure}

For example, let $w = aabbaaababaaaabbaa$ and $d=3$.
See also Figure~\ref{fig:period} for illustration.
We have $RLE(w) = a^2b^2a^3b^1a^1b^1a^4b^2a^1$.
Then, we compute
$\Parikh_{v_1} = \langle 2,1 \rangle$ from $\RLE(v_1) = a^2b^1$,
$\Parikh_{v_2} = \langle 2, 1 \rangle$ from $\RLE(v_2) = b^1a^2$,
$\Parikh_{v_3} = \langle 2, 1 \rangle$ from $\RLE(v_3) = a^1b^1a^1$,
$\Parikh_{v_4} = \langle 2, 1 \rangle$ from $\RLE(v_4) = b^1a^2$,
$\Parikh_{v_5} = \langle 2, 1 \rangle$ from $\RLE(v_5) = a^2b^1$, and
$\Parikh_{v_6} = \langle 1, 1 \rangle$ from $\RLE(v_6) = b^1a^1$.
Since $\Parikh_{v_i} = \Parikh_{v_1}$ for $1 \leq i \leq 5$
and $\Parikh_{v_6} \subset \Parikh_{v_1}$,
$(3, 2)$ is a regular Abelian period of the string $w$.

\section{Computing Abelian squares using RLEs}
\label{sec:squares}

In this section, we describe our algorithm to compute 
all Abelian squares occurring in a given string $w$ of length $n$.
Our algorithm is based on the algorithm of 
Cummings and Smyth~\cite{Cummings_weakrepetitions}
which computes all Abelian squares in $w$ in $O(n^2)$ time.
We will improve the running time to $O(mn)$,
where $m$ is the size of $\RLE(w)$.

\subsection{Cummings and Smyth's $O(n^2)$-time algorithm}

We recall the
$O(n^2)$-time algorithm proposed by Cummings and Smyth~\cite{Cummings_weakrepetitions}.
To compute Abelian squares in a given string $w$, 
their algorithm aligns two adjacent sliding windows of length $d$
each, for every $1 \leq d \leq \lfloor \frac{n}{2} \rfloor$.

Consider an arbitrary fixed $d$.
For each position $1 \leq i \leq n-2d+1$ in $w$,
let $L_i$ and $R_i$ denote the left and right windows
aligned at position $i$.
Namely, $L_i = w[i..i+d-1]$ and $R_i = w[i+d..i+2d-1]$.
At the beginning, the algorithm computes 
$\Parikh_{L_1}$ and $\Parikh_{R_1}$ for position $1$ in $w$.
It takes $O(d)$ time to compute these Parikh vectors
and $O(\sigma)$ time to compute $\Diff(\Parikh_{L_1}, \Parikh_{R_1})$.
Assume $\Parikh_{L_i}$, $\Parikh_{R_i}$, and
$\Diff(\Parikh_{L_i}, \Parikh_{R_i})$ have been computed
for position $i \geq 1$,
and $\Parikh_{L_{i+1}}$, $\Parikh_{R_{i+1}}$, and
$\Diff(\Parikh_{L_{i+1}}, \Parikh_{R_{i+1}})$ is to be computed
for the next position $i+1$.
A key observation is that 
given $\Parikh_{L_i}$, then $\Parikh_{L_{i+1}}$
for the left window $L_{i+1}$ for the next position $i+1$
can be easily computed in $O(1)$ time,
since at most two entries of the Parikh vector can change.
The same applies to $\Parikh_{R_i}$ and $\Parikh_{R_{i+1}}$.
Also, given $\Diff(\Parikh_{L_i}, \Parikh_{R_i})$
for the two adjacent windows $L_i$ and $R_i$ for position $i$,
then it takes $O(1)$ time to determine whether or not 
$\Diff(\Parikh_{L_{i+1}}, \Parikh_{R_{i+1}}) = 0$
for the two adjacent windows $L_{i+1}$ and $R_{i+1}$ for the next position $i+1$.
Hence, for each $d$,
it takes $O(n)$ time to find all Abelian squares of length $2d$,
and thus it takes a total of $O(n^2)$ time for 
all $1 \leq d \leq \lfloor \frac{n}{2} \rfloor$.

\subsection{Our $O(mn)$-time algorithm}
We propose an algorithm which computes all Abelian squares in a given string $w$
of length $n$ in $O(mn)$ time, where $m$ is the size of $\RLE(w)$.

Our algorithm will output consecutive Abelian squares 
$w[i..i+2d-1]$, $w[i+1..i+2d]$, \ldots, $w[j..j+2d-1]$ of length $2d$ each
as a triple $\langle i, j, d \rangle$.
A single Abelian square $w[i..i+2d-1]$ of length $2d$ will be represented 
by $\langle i, i, d \rangle$.

For any position $i$ in $w$,
let $\Beg(L_i)$ and $\End(L_i)$ respectively denote
the beginning and ending positions of the left window $L_i$,
and let $\Beg(R_i)$ and $\End(R_i)$ respectively denote 
the beginning and ending positions of the right window $R_i$.
Namely, $\Beg(L_i) = i$, $\End(L_i) = i+d-1$,
$\Beg(R_i) = i+d$, and $\End(R_i) = i+2d-1$.
Cummings and Smyth's algorithm described above
increases each of $\Beg(L_i)$, $\End(L_i)$, $\Beg(R_i)$, and $\End(R_i)$
one by one, and tests all positions $i = 1, \ldots, n-2d+1$ in $w$. 
Hence their algorithm takes $O(n)$ time for each window size $d$.

In what follows,
we show that it is indeed enough to check only $O(m)$ positions in $w$
for each window size $d$.
The outline of our algorithm is as follows.
As Cummings and Smyth's algorithm,
we use two adjacent windows of size $d$,
and slide the windows.
However, unlike Cummings and Smyth's algorithm where the windows
are shifted by one position,
in our algorithm the windows can be shifted
by more than one position.
The positions that are not skipped and are explicitly examined
will be characterized by the RLE of $w$,
and the equivalence of the Parikh vectors of the two adjacent windows
for the skipped positions can easily be checked by simple arithmetics.

Now we describe our algorithm in detail.
First, we compute $\RLE(w)$ and let $m$ be its size.
Consider an arbitrarily fixed window length $d \geq 1$.

Initially, we compute $\Parikh_{L_1}$ and $\Parikh_{R_1}$ for position $1$.
We can compute these Parikh vectors in $O(m)$ time
and $O(\sigma)$ space using the same method
as in the algorithm of Theorem~\ref{theo:Abelian_periods}
in Section~\ref{sec:Abelian_periods}.

Then, we describe the steps for positions larger than $1$.
For each position $i \geq 1$ in a given string $w$,
let $D_1^i = \Succ(\Beg(L_i)) - \Beg(L_i)$, $D_2^i = \Succ(\Beg(R_i)) - \Beg(R_i)$,
and $D_3^i = \Succ(\End(R_i)+1) - \End(R_i)-1$.
The \emph{break point} for each position $i$, denoted $\bp(i)$,
is defined by $i + \min\{D_1^i, D_2^i, D_3^i\}$.
Assume the left window is aligned at position $i$ in $w$.
Then, we jump to the break point $\bp(i)$ directly from $i$.
In other words, the two windows $L_{i}$ and $R_{i}$
are directly shifted to $L_{\bp(i)}$ and $R_{\bp(i)}$, respectively.

It depends on the value of
$\Diff(\Parikh_{L_i}, \Parikh_{R_i})$ whether there can be
an Abelian square between positions $i$ and $\bp(i)$.
Note that $\Diff(\Parikh_{L_i}, \Parikh_{R_i}) \neq 1$.
Below, we characterize the other cases in detail.

\begin{lemma} \label{lem:diff=0}
  Assume $\Diff(\Parikh_{L_i}, \Parikh_{R_i}) = 0$.
  Then, for any $i < j \leq \bp(i)$,
  $j$ is the beginning position of an Abelian square of length $2d$
  iff $w[\Beg(L_i)] = w[\Beg(R_i)] = w[\End(R_i)+1]$.
\end{lemma}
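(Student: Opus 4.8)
The plan is to reduce the whole statement to a single algebraic identity about Parikh vectors, obtained by telescoping the window updates over the shifts from $i$ to $j$. First I would fix the shorthand $a = w[\Beg(L_i)]$, $b = w[\Beg(R_i)]$, $c = w[\End(R_i)+1]$ for the three ``boundary'' characters, and write $e_x$ for the Parikh vector that is $1$ at the coordinate of character $x$ and $0$ elsewhere. The elementary ingredient is the one-step update: sliding from $k$ to $k+1$ removes $w[\Beg(L_k)]$ from the left window and appends $w[\End(L_k)+1] = w[\Beg(R_k)]$, while it removes $w[\Beg(R_k)]$ from the right window and appends $w[\End(R_k)+1]$. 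Hence
\begin{equation*}
  (\Parikh_{L_{k+1}} - \Parikh_{R_{k+1}}) - (\Parikh_{L_k} - \Parikh_{R_k}) = -e_{w[\Beg(L_k)]} + 2\,e_{w[\Beg(R_k)]} - e_{w[\End(R_k)+1]}.
\end{equation*}

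Next I would argue that throughout the interval the three participating characters never change. By definition $\bp(i) - i = \min\{D_1^i, D_2^i, D_3^i\}$, and $D_1^i, D_2^i, D_3^i$ are precisely the distances from $\Beg(L_i)$, $\Beg(R_i)$, and $\End(R_i)+1$ to the next $\RLE$ boundary (via $\Succ$). Therefore, for every $k$ with $i \le k \le \bp(i)-1$, each of $\Beg(L_k)=k$, $\Beg(R_k)=k+d$, and $\End(R_k)+1=k+2d$ still lies before the next boundary past the corresponding base position, i.e.\ inside the same maximal run as at $k=i$. Consequently $w[\Beg(L_k)]=a$, $w[\Beg(R_k)]=b$, $w[\End(R_k)+1]=c$ for all such $k$, and the per-step increment above equals the constant vector $-e_a + 2e_b - e_c$.

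Now I would telescope. Summing the increment over $k=i,\ldots,j-1$ and using the hypothesis $\Diff(\Parikh_{L_i}, \Parikh_{R_i}) = 0$, i.e.\ $\Parikh_{L_i} = \Parikh_{R_i}$, gives
\begin{equation*}
  \Parikh_{L_j} - \Parikh_{R_j} = (j-i)\bigl(-e_a + 2e_b - e_c\bigr).
\end{equation*}
Since $i<j$, the factor $j-i$ is a positive integer, so $\Parikh_{L_j} = \Parikh_{R_j}$ (equivalently, $j$ begins an Abelian square of length $2d$) iff $-e_a + 2e_b - e_c$ is the zero vector. It then remains to verify the elementary fact that $-e_a + 2e_b - e_c = \mathbf{0}$ exactly when $a=b=c$: inspecting coordinate $b$, the value there is $2 - [a=b] - [c=b]$, which vanishes only if $a=b$ and $c=b$; conversely $a=b=c$ clearly yields the zero vector. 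This produces the stated equivalence $w[\Beg(L_i)] = w[\Beg(R_i)] = w[\End(R_i)+1]$, independently of the particular $j$.

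The main obstacle I anticipate is the bookkeeping in the second paragraph: one must check, against the precise definitions of $D_1^i, D_2^i, D_3^i$ and $\Succ$, that all three characters stay constant on the entire range $i \le k \le \bp(i)-1$ (in particular at the endpoint $k=\bp(i)-1$), and confirm that the windows at $j=\bp(i)$ remain within $w$ so that the claim is meaningful. Once this constancy is pinned down, the telescoping and the coefficient-$2$ identity are routine.
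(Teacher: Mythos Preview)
Your proof is correct and follows essentially the same approach as the paper: both arguments rest on the observation that the three boundary characters $w[\Beg(L_k)]$, $w[\Beg(R_k)]$, $w[\End(R_k)+1]$ stay constant for $i\le k<\bp(i)$, so that the one-step change in $\Parikh_{L}-\Parikh_{R}$ is the fixed vector $-e_a+2e_b-e_c$, which vanishes exactly when $a=b=c$. The paper carries out the two implications separately with coordinate-wise update formulas, whereas you package both directions into a single telescoped identity; this is a cosmetic improvement, not a different idea.
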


\begin{proof}\label{pro:diff=0}
($\Leftarrow$)
By the definition of $\bp(i)$,
$w[\Beg(L_i)] = w[\Beg(L_j)]$,
$w[\Beg(R_i)] = w[\Beg(R_j)]$, and 
$w[\End(R_i)+1] = w[\End(R_j)+1]$ for all $i < j \leq \bp(i)$.
Let $c = w[\Beg(L_i)] = w[\Beg(R_i)] = w[\End(R_i)+1]$.
Then we have $w[\Beg(L_j)] = w[\Beg(R_j)] = w[\End(R_j)+1] = c$.
Thus the Parikh vectors of the sliding windows do not change
at any position between $i$ and $\bp(i)$. 
Since we have assumed $\Parikh_{L_i} = \Parikh_{R_i}$,
$\Parikh_{L_j} = \Parikh_{R_j}$ for any $i < j \leq \bp(i)$.
Thus $w[j..j+2d-1] = L_jR_j$ is an Abelian square of length $2d$
for any $i < j \leq \bp(i)$.

($\Rightarrow$)
Since $j$ is the beginning position of an Abelian square of length $2d$,
$\Parikh_{L_j} = \Parikh_{R_j}$.
Let $c_p = w[\Beg(L_i)]$, $c_q = w[\Beg(R_i)]$, and $c_t = w[\End(R_i)+1]$.
By the definition of $\bp(i)$,
$w[\Beg(L_j)] = c_p$,
$w[\Beg(R_j)] = c_q$, and 
$w[\End(R_j)+1] = c_t$ for any $i < j \leq \bp(i)$.
Also, for any $i < j \leq \bp(i)$,
$\Parikh_{L_j}[x] = \Parikh_{L_i}[x] - j + i$,
$\Parikh_{L_j}[y] = \Parikh_{L_i}[y] + j - i$,
$\Parikh_{R_j}[y] = \Parikh_{R_i}[y] - j + i$, and
$\Parikh_{R_j}[z] = \Parikh_{R_i}[z] + j - i$.
Recall we have assumed that $\Parikh_{L_i} = \Parikh_{R_i}$ and 
$\Parikh_{L_j} = \Parikh_{R_j}$ for any $i < j \leq \bp(i)$.
This is possible only if $c_p = c_q = c_t$,
namely, $w[\Beg(L_j)] = w[\Beg(R_j)] = w[\End(R_j)+1]$.
\end{proof}

\begin{lemma}\label{lem:diff=2_1}
Assume $\Diff(\Parikh_{L_i}, \Parikh_{R_i}) = 2$.
Let $c_p$ be the unique character which occurs more in the left window $L_i$ than in the right window $R_i$, and
$c_q$ be the unique character which occurs more in the right window $R_i$ than in the left window $L_i$.
Let $x = \Parikh_{L_i}[p] - \Parikh_{R_i}[p] = \Parikh_{R_i}[q] - \Parikh_{L_i}[q] > 0$, and assume $x \leq \min\{D^i_1,D^i_2,D^i_3\}$.
Then, $i+x$ is the beginning position of an Abelian square of length $2d$
iff $w[\Beg(L_i)] = c_p$, $w[\Beg(R_i)] = c_q = w[\End(R_i)+1]$.
Also, this is the only Abelian square of length $2d$ beginning at positions
between $i$ and $\bp(i)$.
\end{lemma}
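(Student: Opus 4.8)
The plan is to track the vector $\Delta_{j} = \Parikh_{L_j} - \Parikh_{R_j}$ as a function of the shift $\delta = j-i$ over the range $0 \le \delta \le \min\{D_1^i, D_2^i, D_3^i\}$, exactly as in the proof of Lemma~\ref{lem:diff=0}. Write $\alpha = w[\Beg(L_i)]$, $\beta = w[\Beg(R_i)]$, and $\gamma = w[\End(R_i)+1]$. By the definition of $\bp(i)$ these three characters stay fixed along the whole range, so when the windows slide one step to the right the left window exchanges one occurrence of $\alpha$ for one of $\beta$ (because $\End(L_j)+1 = \Beg(R_j)$) and the right window exchanges one occurrence of $\beta$ for one of $\gamma$. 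Hence each step changes $\Delta_j$ by the fixed vector $v = 2\mathbf{1}_\beta - \mathbf{1}_\alpha - \mathbf{1}_\gamma$, where $\mathbf{1}_c$ denotes the unit Parikh vector of character $c$, so that $\Delta_{i+\delta} = \Delta_i + \delta\, v$ is affine in $\delta$. A position $i+\delta$ begins an Abelian square of length $2d$ precisely when $\Delta_{i+\delta} = \mathbf{0}$.

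Next I would pin down $\Delta_i$. Since $|L_i| = |R_i| = d$, the entries of $\Delta_i$ sum to $0$; as $\Diff(\Parikh_{L_i}, \Parikh_{R_i}) = 2$ there are exactly two nonzero entries, necessarily $+x$ at $c_p$ and $-x$ at $c_q$ for a common $x > 0$ (this also justifies that the two expressions defining $x$ in the statement agree). Thus $\Delta_i = x(\mathbf{1}_{c_p} - \mathbf{1}_{c_q})$ and the Abelian-square condition at $i+\delta$ becomes the linear equation
\[
  x(\mathbf{1}_{c_p} - \mathbf{1}_{c_q}) + \delta\,(2\mathbf{1}_\beta - \mathbf{1}_\alpha - \mathbf{1}_\gamma) = \mathbf{0}.
\]
Because $x>0$ and $c_p \neq c_q$, the constant term is nonzero, so this affine function of $\delta$ vanishes for at most one value of $\delta$; this already gives the last sentence of the lemma, namely that at most one position $j$ with $i < j \le \bp(i)$ begins an Abelian square of length $2d$.

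For the characterisation itself I would argue both directions by substitution. If $\alpha = c_p$ and $\beta = \gamma = c_q$, then $v = 2\mathbf{1}_{c_q} - \mathbf{1}_{c_p} - \mathbf{1}_{c_q} = \mathbf{1}_{c_q} - \mathbf{1}_{c_p}$ and the equation collapses to $(x-\delta)(\mathbf{1}_{c_p}-\mathbf{1}_{c_q}) = \mathbf{0}$, whose unique root is $\delta = x$; since $x \le \min\{D_1^i,D_2^i,D_3^i\}$ this root lies in range, so $i+x$ begins an Abelian square, yielding the ``if'' direction and its location. For the converse I would set $\delta = x$ (the position under test) and read the equation coordinate by coordinate: the coordinate $c_p$ forces $v[c_p] = -1$, the coordinate $c_q$ forces $v[c_q] = +1$, and every other coordinate forces $v=0$ there, i.e.
\[
  2\mathbf{1}_\beta - \mathbf{1}_\alpha - \mathbf{1}_\gamma = \mathbf{1}_{c_q} - \mathbf{1}_{c_p},
\]
from which I would try to recover $\alpha = c_p$ and $\beta = \gamma = c_q$.

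The main obstacle is precisely this final recovery step. The vector $v$ depends on which of $\alpha,\beta,\gamma$ coincide, so the converse really demands a short case analysis over the coincidence patterns of the three boundary characters, organised for instance by the value of $v[c_p]$. The delicate point is that the displayed character equation does \emph{not} single out one pattern: a direct enumeration shows it is solved both by $(\alpha,\beta,\gamma) = (c_p,c_q,c_q)$ and by $(\alpha,\beta,\gamma) = (c_q,c_q,c_p)$, the latter being the degenerate case $\alpha = \beta$ in which the left window is static and all of the rebalancing is carried by the right window. Completing the proof therefore requires an argument that excludes this second solution under the hypotheses in force — presumably because it belongs to a companion configuration treated separately, as the ``$1$'' in the name of the lemma suggests. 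Verifying exactly which structural constraint rules it out is where I would spend the most care, since it is the step most likely to conceal an implicit assumption.
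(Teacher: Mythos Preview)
Your affine model $\Delta_{i+\delta} = \Delta_i + \delta v$ with $v = 2\mathbf{1}_\beta - \mathbf{1}_\alpha - \mathbf{1}_\gamma$ is the right framework, and your derivations of the ``if'' direction and of uniqueness are clean and complete. This is in fact more explicit than the paper's argument: for $(\Leftarrow)$ the paper writes out the same coordinate updates in prose, while for $(\Rightarrow)$ it essentially just asserts the conclusion (the displayed equalities there are of the trivial form $\Parikh_{L_{i+x}}[p] = \Parikh_{L_{i+x}}[p]$ and carry no content).

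Your worry about the second solution $(\alpha,\beta,\gamma) = (c_q,c_q,c_p)$ is not a defect of your argument but a genuine gap in the lemma as stated. Nothing in the hypotheses excludes it, and it really does produce an Abelian square at $i+x$. Concretely, take $w = q^2p^4q^4p^5$, $d=6$, $i=1$: then $L_1 = qqpppp$, $R_1 = qqqqpp$, so $\Diff=2$ with $c_p=p$, $c_q=q$, $x=2$, and one checks $D_1^1=2$, $D_2^1=4$, $D_3^1=2$, so $x \le \min\{D_1^1,D_2^1,D_3^1\}$. Here $(\alpha,\beta,\gamma)=(q,q,p)=(c_q,c_q,c_p)$, yet $L_3=ppppqq$ and $R_3=qqpppp$ are Abelian equivalent, so $i+x=3$ begins an Abelian square of length $12$ even though $w[\Beg(L_i)] \neq c_p$. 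Thus the ``only if'' direction fails as written; the paper's proof does not address this configuration, and the companion Lemma~\ref{lem:diff=2_2} treats $(c_p,c_q,c_p)$, not this one. Your instinct to isolate this step was exactly right: the missing constraint you were searching for does not exist.
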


\begin{proof}\label{pro:diff=2_1}
  ($\Leftarrow$)
  Since $w[\Beg(L_i)] = c_p$ and $w[\Beg(R_i)] = w[\End(R_i)+1] = c_q$,
  we have that $\Parikh_{L_i}[p] - \Parikh_{R_i}[p] - z = \Parikh_{L_{i+z}}[p] - \Parikh_{R_{i+z}}[p]$ and
  $\Parikh_{R_i}[q] - \Parikh_{L_i}[q] + z = \Parikh_{R_{i+z}}[q] = \Parikh_{L_{i+z}}[q]$ for any $1 \leq z \leq \min\{D^i_1,D^i_2,D^i_3\}$.
  By the definition of $x$,
  the Parikh vectors of the sliding windows become equal at position $i+x$.
    
  ($\Rightarrow$)
  Since $x = \Parikh_{L_i}[p] - \Parikh_{R_i}[p] = \Parikh_{R_i}[q] - \Parikh_{L_i}[q] > 0$,
  $\Parikh_{L_{i+x}}[p] = \Parikh_{L_{i+x}}[p]$,
  and $\Parikh_{L_{i+x}}[q] = \Parikh_{L_{i+x}}[q]$,
  we have $w[\Beg(L_i)] = c_p$ and $w[\Beg(R_i)] = w[\End(R_i)+1] = c_q$.
  From the above arguments, it is clear that $i+x$
  is the only position between $i$ and $\bp(i)$
  where an Abelian square of length $2d$ can start.
\end{proof}

\begin{lemma}\label{lem:diff=2_2}
Assume $\Diff(\Parikh_{L_i}, \Parikh_{R_i}) = 2$.
Let $c_p$ be the unique character which occurs more in the left window $L_i$ than in the right window $R_i$, and
$c_q$ be the unique character which occurs more in the right window $R_i$ than in the left window $L_i$.
Let $x = \Parikh_{L_i}[p] - \Parikh_{R_i}[p] = \Parikh_{R_i}[q] - \Parikh_{L_i}[q] > 0$, and assume $\frac{x}{2} \leq \min\{D^i_1,D^i_2,D^i_3\}$.
Then, $i+\frac{x}{2}$ is the beginning position of an Abelian square of length $2d$
iff $w[\Beg(L_i)] = c_p = w[\End(R_i)+1]$, $w[\Beg(R_i)] = c_q$.
Also, this is the only Abelian square of length $2d$ beginning at positions
between $i$ and $\bp(i)$.
\end{lemma}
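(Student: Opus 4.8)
The plan is to mirror the argument of Lemma~\ref{lem:diff=2_1}, the only change being which of the three boundary characters coincide and hence the rate at which the two surplus coordinates of the Parikh vectors close up. First I would record how the Parikh vectors evolve as the aligned windows are shifted within the break point. For any $1 \leq z \leq \min\{D_1^i, D_2^i, D_3^i\}$, the definition of $\bp(i)$ guarantees that the three boundary characters $w[\Beg(L_i)]$, $w[\Beg(R_i)]$, and $w[\End(R_i)+1]$ are unchanged along the shift, so moving by $z$ deletes $z$ copies of $w[\Beg(L_i)]$ from the left window and inserts $z$ copies of $w[\Beg(R_i)]$ into it, while deleting $z$ copies of $w[\Beg(R_i)]$ from the right window and inserting $z$ copies of $w[\End(R_i)+1]$ into it. Since $\Diff(\Parikh_{L_i}, \Parikh_{R_i}) = 2$, only the coordinates $p$ and $q$ can ever differ, so it suffices to track these two.

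For the ($\Leftarrow$) direction I would substitute the hypothesised coincidences $w[\Beg(L_i)] = c_p = w[\End(R_i)+1]$ and $w[\Beg(R_i)] = c_q$ into these update rules. Under them, shifting by $z$ deletes $c_p$ from $L_i$ and inserts $c_p$ into $R_i$, while deleting $c_q$ from $R_i$ and inserting $c_q$ into $L_i$. Hence $\Parikh_{L_{i+z}}[p] - \Parikh_{R_{i+z}}[p] = x - 2z$, the discrepancy in coordinate $p$ falling by two per shift, and symmetrically $\Parikh_{R_{i+z}}[q] - \Parikh_{L_{i+z}}[q] = x - 2z$; no other coordinate is touched. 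Both discrepancies vanish exactly at $z = x/2$, so $\Parikh_{L_{i+x/2}} = \Parikh_{R_{i+x/2}}$ and $w[j..j+2d-1]$ with $j = i + x/2$ is an Abelian square of length $2d$, which is legitimate because $x/2 \leq \min\{D_1^i, D_2^i, D_3^i\}$ is assumed.

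For the ($\Rightarrow$) direction I would run this backwards: an Abelian square at some $i < j \leq \bp(i)$ forces both surplus coordinates to reach zero, which by the update rules is possible only if the deletions and insertions are routed to coordinates $p$ and $q$ in exactly the above pattern, pinning down $w[\Beg(L_i)] = c_p = w[\End(R_i)+1]$ and $w[\Beg(R_i)] = c_q$. Since each surplus coordinate is a nonconstant affine function of $z$ (slope $\pm 2$), it can equal zero at a single value of $z$ only, so $j = i + x/2$ is the unique position in the range that yields an Abelian square of length $2d$, which establishes the final uniqueness claim.

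The main subtlety I expect is the factor-of-two bookkeeping that distinguishes this case from Lemma~\ref{lem:diff=2_1}: here $c_p$ is both deleted from the left window and inserted into the right window, so coordinate $p$ closes at rate $2$ rather than rate $1$, which is precisely why the square surfaces at $i + x/2$ instead of $i + x$. A minor point worth flagging is parity: the position $i + x/2$ is an integer only when $x$ is even, and when $x$ is odd the affine discrepancy $x - 2z$ never vanishes at an integer $z$, so no Abelian square of length $2d$ begins strictly between $i$ and $\bp(i)$ in that subcase, consistent with the statement.
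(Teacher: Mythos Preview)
Your proposal is correct and follows essentially the same approach as the paper: both track how the two surplus Parikh-vector coordinates evolve as affine functions of the shift $z$, with slope $\pm 2$ in this case because $c_p$ is simultaneously deleted from the left window and inserted into the right, and then read off the unique zero at $z=x/2$. Your write-up is in fact more explicit than the paper's (which is quite terse in the $(\Rightarrow)$ direction), and your parity remark about odd $x$ is a useful observation the paper omits.
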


\begin{proof}\label{pro:diff=2_2}
  ($\Leftarrow$)
  Since $w[\Beg(L_i)] = c_p = w[\End(R_i)+1]$ and $w[\Beg(R_i)] = c_q$,
  we have that $\Parikh_{L_i}[p] - \Parikh_{R_i}[p] - 2z = \Parikh_{L_{i+z}}[p] - \Parikh_{R_{i+z}}[p]$ and
  $\Parikh_{R_i}[q] - \Parikh_{L_i}[q] + 2z = \Parikh_{R_{i+z}}[q] = \Parikh_{L_{i+z}}[q]$ for any $1 \leq z \leq \min\{D^i_1,D^i_2,D^i_3\}$.
  Since $\frac{x}{2} \leq \min\{D^i_1,D^i_2,D^i_3\}$,
  the Parikh vectors of the sliding windows become equal at position $i+\frac{x}{2}$.
  ($\Rightarrow$)
  Since $x = \Parikh_{L_i}[p] - \Parikh_{R_i}[p] = \Parikh_{R_i}[q] - \Parikh_{L_i}[q] > 0$,
  $\Parikh_{L_{i+\frac{x}{2}}}[p] = \Parikh_{L_{i+\frac{x}{2}}}[p]$,
  and $\Parikh_{L_{i+\frac{x}{2}}}[q] = \Parikh_{L_{i+\frac{x}{2}}}[q]$,  
  we have   $w[\Beg(L_i)] = c_p = w[\End(R_i)+1]$ and $w[\Beg(R_i)] = c_q$.
  From the above arguments, it is clear that $i+\frac{x}{2}$
  is the only position between $i$ and $\bp(i)$
  where an Abelian square of length $2d$ can start.
\end{proof}

\begin{lemma} \label{lem:diff=3}
Assume $\Diff(\Parikh_{L_i}, \Parikh_{R_i}) = 3$.
Let $c_{p} = w[\Beg(L_i)]$,
$c_{p'} = w[\End(R_i)+1]$, and
$c_q = w[\Beg(R_i)]$.
Then, $i+x$ with $i < i+x \leq \bp(i)$
is the beginning position of an Abelian square of length $2d$
iff
$0 < x = \Parikh_{L_i}[p] - \Parikh_{R_i}[p] = \Parikh_{L_i}[p'] - \Parikh_{R_i}[p'] = \frac{\Parikh_{R_i}[q] - \Parikh_{L_i}[q]}{2} \leq \min\{D_1^i, D_2^i, D_3^i\}$.
Also, this is the only Abelian square of length $2d$ beginning at positions
between $i$ and $\bp(i)$.
\end{lemma}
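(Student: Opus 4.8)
The plan is to track, for each character $c \in \Sigma$, the signed coordinate difference $\delta_c(z) = \Parikh_{L_{i+z}}[c] - \Parikh_{R_{i+z}}[c]$ as the aligned windows slide from position $i$ by an offset $0 \le z \le \min\{D_1^i, D_2^i, D_3^i\}$; position $i+z$ begins an Abelian square of length $2d$ exactly when $\delta_c(z) = 0$ for every $c$. The whole point of the break point $\bp(i)$ is that throughout this range the three symbols exchanged by a unit shift are frozen: the symbol leaving the front of $L$ is always $c_p = w[\Beg(L_i)]$ (ensured by $z \le D_1^i$), the symbol that simultaneously enters the back of $L$ and leaves the front of $R$ is always $c_q = w[\Beg(R_i)]$ (ensured by $z \le D_2^i$), and the symbol entering the back of $R$ is always $c_{p'} = w[\End(R_i)+1]$ (ensured by $z \le D_3^i$). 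First I would record these three facts as the immediate consequence of $z \le \min\{D_1^i, D_2^i, D_3^i\}$, exactly as in the proofs of the $\Diff = 2$ cases.

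Next I would compute the per-shift effect on the two Parikh vectors: each unit increment of $z$ decreases $\Parikh_{L_{i+z}}[p]$ by one and increases $\Parikh_{L_{i+z}}[q]$ by one, and decreases $\Parikh_{R_{i+z}}[q]$ by one and increases $\Parikh_{R_{i+z}}[p']$ by one. Assuming $c_p, c_q, c_{p'}$ are pairwise distinct, this yields the affine laws $\delta_p(z) = \delta_p(0) - z$, $\delta_{p'}(z) = \delta_{p'}(0) - z$, and $\delta_q(z) = \delta_q(0) + 2z$, while $\delta_c(z) = \delta_c(0)$ stays constant for every other character. Forcing the three active differences to vanish gives $z = \delta_p(0) = \delta_{p'}(0)$ and $z = -\delta_q(0)/2 = (\Parikh_{R_i}[q] - \Parikh_{L_i}[q])/2$, i.e. exactly the displayed chain of equalities with $x = z$; since each law is affine with nonzero slope, this common root is unique, which delivers the ``only Abelian square'' clause once both implications are established.

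For the two directions I would argue as follows. For ($\Rightarrow$), a square at $i+x$ forces $\delta_c(x) = 0$ for all $c$; because $\delta_c$ is invariant off the edge characters, every coordinate where $\Parikh_{L_i}$ and $\Parikh_{R_i}$ disagree must lie in $\{c_p, c_q, c_{p'}\}$, and since $\Diff(\Parikh_{L_i}, \Parikh_{R_i}) = 3$ there are exactly three such coordinates, so $c_p, c_q, c_{p'}$ are pairwise distinct and constitute precisely the disagreeing triple; the three zero conditions then collapse to the displayed equation, and $i < i+x \le \bp(i)$ gives $0 < x \le \min\{D_1^i, D_2^i, D_3^i\}$. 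For ($\Leftarrow$), the displayed equation with $0 < x \le \min\{D_1^i, D_2^i, D_3^i\}$ makes the three active differences vanish at $z = x$ by the affine laws, while $\Diff = 3$ together with the distinctness of the edge characters guarantees that no other coordinate differs, so $\Parikh_{L_{i+x}} = \Parikh_{R_{i+x}}$.

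I expect the main obstacle to be the degenerate overlaps among $c_p$, $c_q$, $c_{p'}$, which is precisely what distinguishes this lemma from the $\Diff = 2$ cases. The signs are easy: $x > 0$ makes $c_p$ and $c_{p'}$ heavier in $L_i$ and $c_q$ heavier in $R_i$, so $c_q \notin \{c_p, c_{p'}\}$ at once. The delicate part is the coincidence $c_p = c_{p'}$: then a single shift moves their shared difference by $-2$ rather than $-1$, so the genuine root is $x/2$, not $x$, and the displayed equation would no longer certify a square. I would resolve this by treating that coincidence separately and showing it lies outside the positive case: when $c_p = c_{p'}$ only two distinct edge characters remain, so under $\Diff = 3$ the third disagreeing coordinate is a non-edge character whose difference never changes, precluding any square in the range; hence the operative situation in the $\Diff = 3$ branch is the one with three distinct edge characters, for which the affine analysis and the displayed equation are exact.
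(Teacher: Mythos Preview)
Your approach mirrors the paper's: track the signed differences $\delta_c(z) = \Parikh_{L_{i+z}}[c] - \Parikh_{R_{i+z}}[c]$ as the windows shift, write down the affine update rules, and solve for the unique common zero. You are in fact more thorough than the paper on two points: you make explicit that the non-edge coordinates stay constant (and hence, in the $(\Rightarrow)$ direction, that the three disagreeing entries must lie in $\{c_p, c_q, c_{p'}\}$, forcing these to be pairwise distinct), and you flag the degenerate overlap $c_p = c_{p'}$, which the paper's proof passes over in silence.

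Your resolution of that degenerate case, however, does not close the biconditional. You argue that when $c_p = c_{p'}$ and $\Diff = 3$ some non-edge coordinate carries a nonzero constant difference, so no square can begin in the range; this makes the left side identically false. But for the iff you must also show the right side is false, and you never check this. With $p = p'$ the displayed chain collapses to the two conditions $x = \delta_p(0)$ and $x = -\tfrac{1}{2}\delta_q(0)$, and the length constraint $\sum_c \delta_c(0) = 0$ then forces the remaining off-edge difference to equal $x$ as well; nothing prevents the pattern $(\delta_p,\delta_q,\delta_r) = (x,-2x,x)$ from occurring with $0 < x \le \min\{D_1^i,D_2^i,D_3^i\}$. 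In that configuration the right-hand side of the lemma is satisfied while no square exists, so the biconditional as literally stated fails there. Your instinct that the lemma really concerns the case of three pairwise distinct edge characters is the correct diagnosis, but neither your argument nor the paper's supplies that missing hypothesis or the missing check.
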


\begin{proof}\label{pro:diff=3}
  ($\Leftarrow$)
  Since $w[\Beg(L_i)] = c_p$,  $w[\End(R_i)+1] = c_{p'}$ and $w[\Beg(R_i)] = c_q$,
  we have that $\Parikh_{L_i}[p] - z = \Parikh_{L_{i+z}}[p]$,
  $\Parikh_{L_i}[q] + z = \Parikh_{L_{i+z}}[q]$,
  $\Parikh_{R_i}[q] - z = \Parikh_{R_{i+z}}[q]$,
  $\Parikh_{L_i}[q] + z = \Parikh_{L_{i+z}[q]}$ and
  $\Parikh_{R_i}[p'] + z = \Parikh_{R_{i+z}}[p']$
   for any $1 \leq z \leq \min\{D^i_1,D^i_2,D^i_3\}$.
  Since $x \leq \min\{D^i_1,D^i_2,D^i_3\}$,
  the Parikh vectors of the sliding windows become equal at position $i+x$
  and $i < i+x \leq \bp(i)$.

  ($\Rightarrow$)
  Since $i < i+x \leq \bp(i)$,
  we have $ < x \leq \min\{D^i_1,D^i_2,D^i_3\}$.
  Since $w[\Beg(L_i)] = c_p$, $w[\End(R_i)+1] = c_{p'}$,
  $w[\Beg(R_i)] = c_q$, and $\Parikh_{L_{i+x}} = \Parikh_{R_{i+x}}$,
  we have $x = \Parikh_{L_i}[p] - \Parikh_{R_i}[p]
  = \Parikh_{L_i}[p'] - \Parikh_{R_i}[p']
  = \frac{\Parikh_{R_i}[q] - \Parikh_{L_i}[q]}{2}$.

  From the above arguments, it is clear that $i+x$
  is the only position between $i$ and $\bp(i)$
  where an Abelian square of length $2d$ can start. 
\end{proof}

\begin{lemma}\label{lem:diff>=4}
  Assume $\Diff(\Parikh_{L_i}, \Parikh_{R_i}) \geq 4$.
  Then, there exists no Abelian square of length $2d$
  beginning at any position $j$
  with $i < j \leq \bp(i)$.
\end{lemma}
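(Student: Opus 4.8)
The plan is to track the entrywise difference of the two windows' Parikh vectors as the aligned pair is shifted rightward from position $i$ toward $\bp(i)$, and to show that when four or more characters are already out of balance at $i$, at least one of them must remain out of balance for the whole range $i < j \le \bp(i)$, so no Abelian square can begin there. Write $\delta_j[c] = \Parikh_{L_j}[c] - \Parikh_{R_j}[c]$, so that $\Diff(\Parikh_{L_j}, \Parikh_{R_j}) = |\{c : \delta_j[c] \neq 0\}|$ and $j$ begins an Abelian square of length $2d$ iff $\delta_j[c] = 0$ for every $c$.

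First I would make precise which entries of $\delta$ can change under a shift of size $z = j - i$. Shifting one position to the right makes $L$ drop $w[\Beg(L_i)]$ and acquire $w[\End(L_i)+1] = w[\Beg(R_i)]$, while $R$ drops $w[\Beg(R_i)]$ and acquires $w[\End(R_i)+1]$. By the definition $\bp(i) = i + \min\{D_1^i, D_2^i, D_3^i\}$, for every $z$ with $1 \le z \le \bp(i)-i$ the $z$ positions leaving and entering each window stay inside a single RLE factor, so the characters involved remain exactly $c_p = w[\Beg(L_i)]$, $c_q = w[\Beg(R_i)]$, and $c_t = w[\End(R_i)+1]$. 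Consequently only the entries $\delta_{i+z}[c_p]$, $\delta_{i+z}[c_q]$, $\delta_{i+z}[c_t]$ can differ from their values at $i$, with $\delta_{i+z}[c_p] = \delta_i[c_p] - z$, $\delta_{i+z}[c_t] = \delta_i[c_t] - z$, and $\delta_{i+z}[c_q] = \delta_i[c_q] + 2z$ in the generic case that these three characters are distinct; the precise values are in fact immaterial. The key consequence I would record is simply $\delta_{i+z}[c] = \delta_i[c]$ for every $c \notin \{c_p, c_q, c_t\}$ and every $1 \le z \le \bp(i)-i$.

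Then I would close with a counting argument. The set of characters whose difference the shift can touch is contained in $\{c_p, c_q, c_t\}$, which has at most three elements. If $\Diff(\Parikh_{L_i}, \Parikh_{R_i}) \ge 4$, then at least four distinct characters $c$ satisfy $\delta_i[c] \neq 0$; since at most three of them lie in $\{c_p, c_q, c_t\}$, there is a character $c^\star \notin \{c_p, c_q, c_t\}$ with $\delta_i[c^\star] \neq 0$. For every $j = i+z$ with $1 \le z \le \bp(i)-i$ we then have $\delta_j[c^\star] = \delta_i[c^\star] \neq 0$, hence $\Diff(\Parikh_{L_j}, \Parikh_{R_j}) \ge 1 > 0$, and no such $j$ begins an Abelian square of length $2d$. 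The only point needing care, rather than a real obstacle, is that $c_p, c_q, c_t$ need not be pairwise distinct; but any coincidence only shrinks the set of affected characters to two or one, which strengthens the bound, so the conclusion holds \emph{a fortiori}, and the step $\delta_j[c] = \delta_i[c]$ for $c$ outside $\{c_p,c_q,c_t\}$ does not rely on distinctness.
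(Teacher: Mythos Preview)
Your proof is correct and takes essentially the same approach as the paper: both rely on the fact that during the slide from $i$ to $\bp(i)$ only the (at most three) entries of the difference vector corresponding to $c_p = w[\Beg(L_i)]$, $c_q = w[\Beg(R_i)]$, and $c_t = w[\End(R_i)+1]$ can change, so at most three mismatches can be repaired. The paper compresses this into the inequality $\Diff(\Parikh_{L_i}, \Parikh_{R_i}) - \Diff(\Parikh_{L_j}, \Parikh_{R_j}) \le 3$, whereas you make the same point by exhibiting a specific character $c^\star \notin \{c_p,c_q,c_t\}$ whose entry stays nonzero throughout; the underlying argument is identical.
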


\begin{proof}\label{pro:diff>=4}
  By the definition of $\bp(i)$,
  we have that $w[\Beg(L_i)] = w[\Beg(L_{\bp(i)})-1]$,
  $w[\Beg(R_i)] = w[\Beg(R_{\bp(i)})-1]$, and
  $w[\End(R_i)] = w[\End(R_{\bp(i)})-1]$.
  Since the ending position of the left sliding window
  is adjacent to the beginning position of the right sliding window,
  we have $\Diff(\Parikh_{L_i}, \Parikh_{R_i}) - \Diff(\Parikh_{L_{j}}, \Parikh_{R_{j}}) \leq 3$ for any $i \leq j \leq \bp(i)$.
  Since we have assumed $\Diff(\Parikh_{L_i}, \Parikh_{R_i}) \geq 4$,
  we get $\Diff(\Parikh_{L_{j}}, \Parikh_{R_{j}}) \geq 1$.
  Thus there exist no Abelian squares starting at position $j$.
\end{proof}

We are ready to show the main result of this section.
\begin{theorem}
Given a string $w$ of the length $n$ over an alphabet of size $\sigma$, 
we can compute all Abelian squares in $w$ in $O(mn)$ time
and $O(n)$ working space,
where $m$ is the size of $\RLE(w)$.
\end{theorem}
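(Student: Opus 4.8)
The plan is to turn the case analysis of Lemmas~\ref{lem:diff=0}--\ref{lem:diff>=4} into a single left-to-right sweep for each fixed window length $d$, in which the two windows are advanced not one position at a time but directly from each break point to the next. First I would precompute the array $S$ with $S[j] = \Succ(j)$ in $O(n)$ time, exactly as in the proof of Theorem~\ref{theo:Abelian_periods}; this lets each of $D_1^i$, $D_2^i$, $D_3^i$, and hence $\bp(i)$, be evaluated in $O(1)$ time. For a fixed $d$ I would initialize at position $i=1$ by computing $\Parikh_{L_1}$, $\Parikh_{R_1}$, and $\Diff(\Parikh_{L_1},\Parikh_{R_1})$ in $O(m)$ time using the RLE-based method of Theorem~\ref{theo:Abelian_periods}, and explicitly test whether position $1$ begins an Abelian square.

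The main loop maintains, at the current break point $i$, the three boundary characters $w[\Beg(L_i)]$, $w[\Beg(R_i)]$, $w[\End(R_i)+1]$ and the value $\Diff(\Parikh_{L_i},\Parikh_{R_i})$, and branches on that value: if it is $0$ I would invoke Lemma~\ref{lem:diff=0} to decide whether every position in $(i,\bp(i)]$ begins an Abelian square and, if so, emit them compactly as the triple $\langle i+1, \bp(i), d\rangle$; if it is $2$ I would check the two boundary configurations of Lemmas~\ref{lem:diff=2_1} and~\ref{lem:diff=2_2} and report the at-most-one square at $i+x$ or $i+\frac{x}{2}$ whenever $x$ (resp.\ $\frac{x}{2}$) does not exceed $\min\{D_1^i,D_2^i,D_3^i\}=\bp(i)-i$; if it is $3$ I would apply Lemma~\ref{lem:diff=3}; and if it is $\geq 4$, Lemma~\ref{lem:diff>=4} guarantees nothing is reported. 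The value $1$ never arises because the two windows have equal length, so their Parikh vectors have equal coordinate sums. Since $\bp(i)$ keeps each of the three boundaries inside a single $\RLE$ factor while advancing from $i$, shifting the windows to $\bp(i)$ changes at most the three entries indexed by those boundary characters, each by exactly $\bp(i)-i$; hence $\Parikh_{L_{\bp(i)}}$, $\Parikh_{R_{\bp(i)}}$, and the updated $\Diff$ can all be recomputed in $O(1)$ time before the next iteration.

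For correctness I would argue that the break points $1 = i_0 < i_1 < \cdots$ with $i_{k+1} = \bp(i_k)$ partition the tested positions, so that every candidate start $j \le n-2d+1$ lies in a unique set $\{i_0\}$ or interval $(i_k, i_{k+1}]$, and that the lemma applied at the left endpoint of that interval reports $j$ exactly when $j$ begins an Abelian square of length $2d$; ranging over all $d$ then gives completeness. For the running time, the key step---and the one I expect to be the main obstacle to state cleanly---is bounding the number of break points per $d$: each break point is caused by one of the monotonically increasing pointers $\Beg(L)$, $\Beg(R)$, $\End(R)+1$ reaching an $\RLE$ boundary, and since each of these three pointers crosses at most $m$ such boundaries as it traverses $w$, there are $O(m)$ break points for each $d$. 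Combining $O(m)$ initialization, $O(m)$ break points at $O(1)$ amortized cost each, and the $\lfloor \frac{n}{2} \rfloor$ choices of $d$ yields $O(mn)$ total time. The space is dominated by the array $S$ of length $n$ together with the two maintained Parikh vectors of size $\sigma \le m \le n$, for $O(n)$ working space.
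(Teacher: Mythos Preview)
Your proposal is correct and follows essentially the same approach as the paper's proof: both use the array $S$ for $O(1)$ break-point computation, initialize each $d$ in $O(m)$ time, dispatch on $\Diff(\Parikh_{L_i},\Parikh_{R_i})$ to the appropriate lemma, and bound the number of break points per $d$ by $O(m)$ via the observation that each shift forces at least one of the three moving boundaries to land on an RLE boundary. Your phrasing of the break-point count (three monotone pointers each crossing at most $m$ RLE boundaries) is slightly different from the paper's (each RLE-boundary position can serve as $\bp(i)$ for at most three distinct $i$), but the two arguments are equivalent.
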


\begin{proof}
Consider an arbitrarily fixed window length $d$.
As was explained, it takes $O(m)$ time to compute
$\Parikh_{L_1}$, $\Parikh_{R_1}$, and $\Diff(\Parikh_{L_1}, \Parikh_{R_1})$
for the initial position $1$.
Suppose that the two windows are aligned at some position $i \geq 1$.
Then, our algorithm computes Abelian squares starting
at positions between $i$ and $\bp(i)$
using one of Lemma~\ref{lem:diff=0}, 
Lemma~\ref{lem:diff=2_1}, Lemma~\ref{lem:diff=2_2},
Lemma~\ref{lem:diff=3}, and Lemma~\ref{lem:diff>=4},
depending on the value of $\Diff(\Parikh_{L_1}, \Parikh_{R_i})$.
In each case, all Abelian squares of length $2d$
starting at positions between $i$ and $\bp(i)$ can be computed 
in $O(1)$ time by simple arithmetics.
Then, the left and right windows $L_i$ and $R_i$
are shifted to $L_{\bp(i)}$ and $R_{\bp(i)}$,
respectively.
Using the array $S$ as in Theorem~\ref{theo:Abelian_periods},
we can compute $\bp(i)$ in $O(1)$ time
for a given position $i$ in $w$.

Let us analyze the number of times the windows are shifted for each $d$.
Since $\bp(i) = i + \min\{D^i_1,D^i_2,D^i_3\}$,
for each position $p$ there can be at most three distinct positions 
$i, j, k$ such that $p = \bp(i) = \bp(j) = \bp(k)$.
Thus, for each $d$ we shift the two adjacent windows at most $3m$ times.

Overall, our algorithm runs in $O(mn)$ time for all window lengths
$d = 1, \ldots, \lfloor n/2 \rfloor$.
The space requirement is $O(n)$ since 
we need to maintain the Parikh vectors of the two sliding windows
and the array $S$.
\end{proof}


\subsection{Example for Computing Abelian squares using RLEs}
\label{square_examples}

Here we show some examples
on how our algorithm 
computes all Abelian squares of a given string based on its RLE.

Consider string $w=a^{12}b^4a^3c^2d^2c^2a^2$
over alphabet $\Sigma = \{a,b,c,d\}$ of size $4$.
Let $d=4$.

\begin{figure}[!h]
\centering{
\includegraphics[width=9cm]{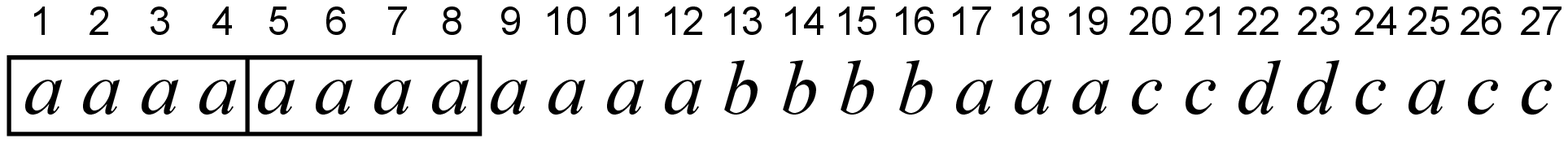}
}
\caption{$\Beg(L_1) = 1, \Beg(R_1) = 5, \End(R_1)+1 = 9, w[\Beg(L_1)] = w[\Beg(R_1)] = w[\End(R_1)+1] = a$.}
\label{fig:step1}
\end{figure}
See Figure~\ref{fig:step1} for the initial step of our algorithm,
where $i = 1$.
As $\Diff(\Parikh_{L_1}, \Parikh_{R_1}) = 0$,
$w[1..8] = aaaaaaaa$ is an Abelian square.
Since $\min \{D_1^1,D_2^1,D_3^1\} = \min \{12,8,4\} = 4$,
the next break point is $\bp(1) = 1 + 4 = 5$.
Since $w[\Beg(L_1)] = w[\Beg(R_1)] = w[\End(R_1)+1] = a$ and
it follows from Lemma~\ref{lem:diff=0} that
the substrings of length $2d = 8$
between $1$ and the break point are all equal,
i.e., $w[1..8] = w[2..9] = w[3..10] = w[4..11] = w[5..12]$,
and all of them are Abelian squares.
Hence we output a triple $\langle 1, 5, 4 \rangle$
representing all these Abelian squares.
We update $i \leftarrow \bp(1) = 5$,
and proceed to the next step.

\begin{figure}[!h]
\centering{
\includegraphics[width=9cm]{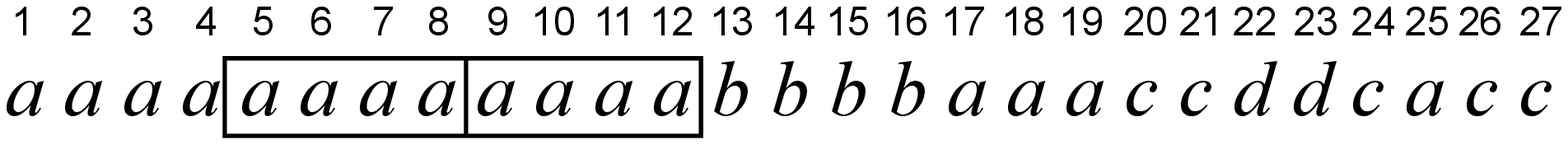}
}
\caption{$\Beg(L_5) = 5, \Beg(R_5) = 9, \End(R_5)+1 = 13, w[\Beg(L_5)] = w[\Beg(R_5)] = a, w[\End(R_5)+1] = b$.}
\label{fig:step2}
\end{figure}
Next, see Figure~\ref{fig:step2} 
where the left window has been shifted
to $L_5 = w[5..6] = aaaa$ and the right window has been shifted to
$R_5 = w[8..12] = aaaa$.
Since $\min \{D_1^5,D_2^5,D_3^5\} = \min \{8,4,4\} = 4$,
the next break point is $\bp(5) = 5+4 = 9$.
Since $\Parikh_{L_5} = \Parikh_{R_5}$ and
$w[\Beg(L_5)] = w[\Beg(R_5)] = a \neq w[\End(R_5)+1] = b$,
it follows from Lemma~\ref{lem:diff=0} that 
there are no Abelian squares between $5$
and the break point $9$.
We update $i \leftarrow \bp(5) = 9$,
and proceed to the next step.

\begin{figure}[!h]
\centering{
\includegraphics[width=9cm]{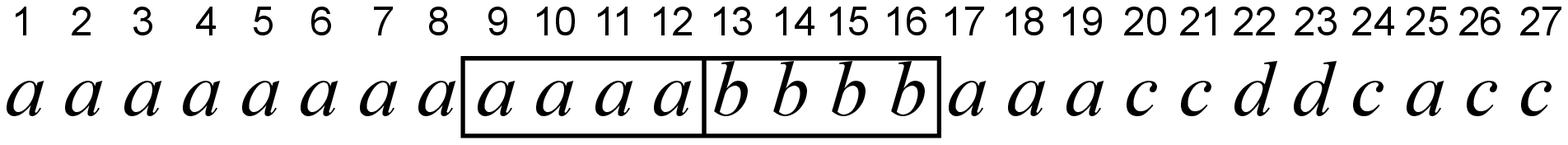}
}
\caption{$\Beg(L_9) = 9, \Beg(R_9) = 13, \End(R_9)+1 = 17, w[\Beg(L_9)] = a, w[\Beg(R_9)] = b, w[\End(R_9)+1] = a$.}
\label{fig:step3}
\end{figure}
Next, see Figure~\ref{fig:step3}
where the left window has been shifted
to $L_9 = w[9..12] = aaaa$ and the right window has been shifted to
$R_9 = w[13..16] = bbbb$.
Since $\min \{D_1^9,D_2^9,D_3^9\} = \min \{4,4,3\} = 3$,
the next break point is $\bp(9) = 9+3 = 12$.
Since $\Diff(\Parikh_{L_9}, \Parikh_{R_9}) = 2$,
$w[\Beg(L_9)] = w[\End(R_9)+1] = a \neq w[\Beg(R_9)] = b$,
and $\frac{\Parikh_{L_9}[a] - \Parikh_{R_9}[a] = \Parikh_{R_9}[b] - \Parikh_{L_9}[b]}{2} = 2 \leq \min \{D_1^9,D_2^9,D_3^9\} = 3$,
it follows from Lemma~\ref{lem:diff=2_2} that
$w[11..18]$ is the only Abelian square of length $2d = 8$
starting at positions between $9$ and $12$.
We hence output $\langle 11, 11, 4 \rangle$.
We update $i \leftarrow \bp(9) = 12$,
and proceed to the next step.

\begin{figure}[!h]
\centering{
\includegraphics[width=9cm]{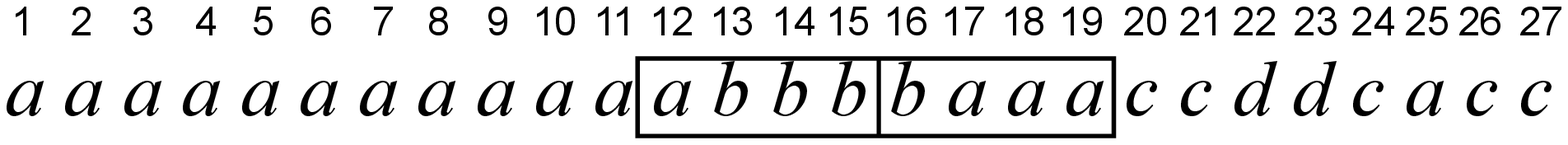}
}
\caption{$\Beg(L_{12}) = 12, \Beg(R_{12}) = 16, \End(R_{12})+1 = 20, w[\Beg(L_{12})] = a, w[\Beg(R_{12})] = b, w[\End(R_{12})+1] = c$.}
\label{fig:step4}
\end{figure}
Next, see Figure~\ref{fig:step4}
where the left window has been shifted
to $L_{12} = w[12..15] = abbb$ and the right window has been shifted to
$R_{12} = w[16..19] = baaa$.
Since $\min \{D_1^{12},D_2^{12},D_3^{12}\} = \min\{1, 1, 1\} = 1$,
the next break point is $\bp(12) = 12+1 = 13$.
Since $\Diff(\Parikh_{L_{12}}, \Parikh_{R_{12}}) = 3$ and
$w[\Beg(L_{12})] = a \neq w[\Beg(R_{12})] = b \neq w[\End(R_{12})+1] = c$,
it follows from Lemma~\ref{lem:diff=2_1} and Lemma~\ref{lem:diff=2_2} that 
there are no Abelian squares starting at positions between $12$
and $13$.
We update $i \leftarrow \bp(12) = 13$,
and proceed to the next step.

\begin{figure}[!h]
\centering{
\includegraphics[width=9cm]{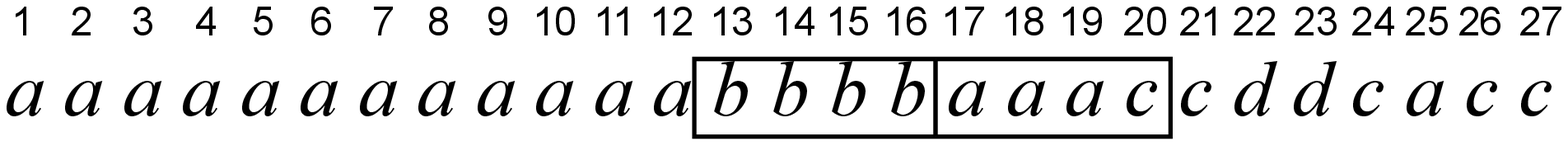}
}
\caption{$\Beg(L_{13}) = 13, \Beg(R_{13}) = 17, \End(R_{13})+1 = 21, w[\Beg(L_{13})] = b, w[\Beg(R_{13})] = a, w[\End(R_{13})+1] = c$.}
\label{fig:step5}
\end{figure}
Next, see Figure~\ref{fig:step5}
where the left window has been shifted
to $L_{13} = w[13..16] = bbbb$ and the right window has been shifted to
$R_{13} = w[17..20] = aaac$.
Since $\min\{D_1^{13}, D_2^{13}, D_3^{13}\} = \min \{4, 3, 1\} = 1$,
the next break point is $\bp(13) = 13+1 = 14$.
Since $\Diff(\Parikh_{L_{13}}, \Parikh_{R_{13}}) = 3$
and $\Parikh_{L_{13}}[b] - \Parikh_{R_{13}}[b] = 4 \neq -1 = \Parikh_{L_{13}}[c] - \Parikh_{R_{13}}[c]$,
it follows from Lemma~\ref{lem:diff=3} that 
$14$ is not the beginning position of an Abelian square of length $2d = 8$.
We update $i \leftarrow \bp(13) = 14$,
and proceed to the next step.

\begin{figure}[!h]
\centering{
\includegraphics[width=9cm]{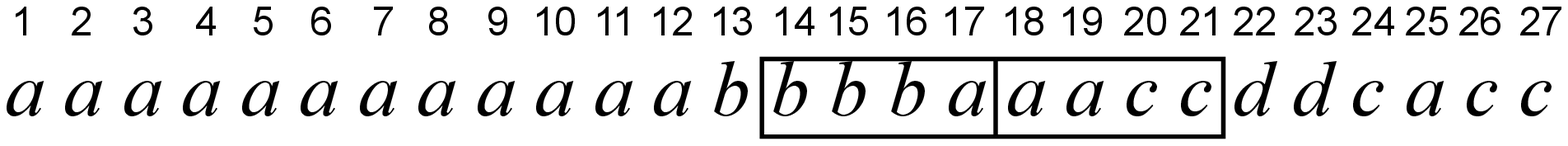}
}
\caption{$\Beg(L_{14}) = 14, \Beg(R_{14}) = 18, \End(R_{14})+1 = 22, w[\Beg(L_{14})] = b, w[\Beg(R_{14})] = a, w[\End(R_{14})+1] = d$.}
\label{fig:step6}
\end{figure}
Next, see Figure~\ref{fig:step6}
where the left window has been shifted
to $L_{14} = w[14..17] = bbba$ and the right window has been shifted to
$R_{14} = w[18..21] = aacc$.
Since $\min\{D_1^{14}, D_2^{14}, D_3^{14}\} = \min \{3, 2, 2\} = 2$,
the next break point is $\bp(14) = 14+2 = 16$.
Since $\Diff(\Parikh_{L_{14}}, \Parikh_{R_{14}}) = 3$
and $\Parikh_{L_{14}}[b] - \Parikh_{R_{14}}[b] = 3 \neq -1 = \Parikh_{L_{14}}[c] - \Parikh_{R_{14}}[c]$,
it follows from Lemma~\ref{lem:diff=3} that 
there are no Abelian squares starting at positions between $14$ and $16$.
We update $i \leftarrow \bp(14) = 16$,
and proceed to the next step.

\begin{figure}[!h]
\centering{
\includegraphics[width=9cm]{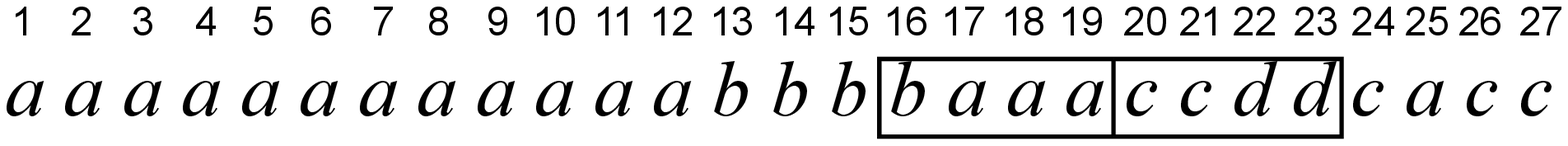}
}
\caption{$\Beg(L_{16}) = 16, \Beg(R_{16}) = 20, \End(R_{16})+1 = 24, w[\Beg(L_{16})] = b, w[\Beg(R_{16})] = w[\End(R_{16})+1] = c$}
\label{fig:step7}
\end{figure}
Next, see Figure~\ref{fig:step7}
where the left window has been shifted
to $L_{16} = w[16..19] = baaa$ and the right window has been shifted to
$R_{16} = w[20..23] = ccdd$.
Since $\min\{D_1^{16}, D_2^{16}, D_2^{16}\} = \min \{1, 2, 2\} = 1$,
the next break point is $\bp(16) = 16+1 =17$.
Since $\Diff(\Parikh_{L_{16}}, \Parikh_{R_{16}}) = 3$
and $\Parikh_{L_{16}}[b] - \Parikh_{R_{16}}[b] = 1 \neq -2 = \Parikh_{L_{16}}[c] - \Parikh_{R_{16}}[c]$,
it follows from Lemma~\ref{lem:diff=3} that 
$16$ is not the beginning position of an Abelian square of length $2d = 8$.
We update $i \leftarrow \bp(16) = 17$,
and proceed to the next step.

\begin{figure}[!h]
\centering{
\includegraphics[width=9cm]{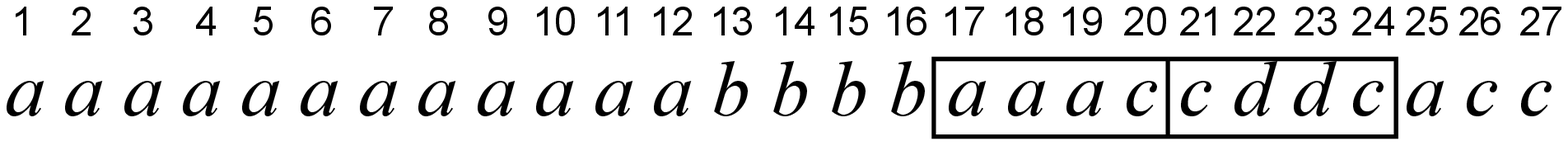}
}
\caption{$\Beg(L_{17}) = 17, \Beg(R_{17}) = 21, \End(R_{17})+1 = 25, w[\Beg(L_{17})] = a, w[\Beg(R_{17})] = c, w[\End(R_{17})+1] = a$}
\label{fig:step8}
\end{figure}
Next, see Figure~\ref{fig:step8}
where the left window has been shifted
to $L_{17} = w[17..20] = aaac$ and the right window has been shifted to
$R_{17} = w[21..24] = cddc$.
Since $\min\{D_1^{17}, D_2^{17}, D_2^{17}\} = \min \{3, 1, 1\} = 1$,
the next break point is $\bp(17) = 17+1 =18$.
Since $\Diff(\Parikh_{L_{17}}, \Parikh_{R_{17}}) = 3$
and $\Parikh_{L_{17}}[a] - \Parikh_{R_{17}}[a] = 3 \neq -2 = \Parikh_{L_{17}}[c] - \Parikh_{R_{17}}[c]$,
it follows from Lemma~\ref{lem:diff=3} that 
$17$ is not the beginning position of an Abelian square of length $2d = 8$.
We update $i \leftarrow \bp(17) = 18$,
and proceed to the next step.

\begin{figure}[!h]
\centering{
\includegraphics[width=9cm]{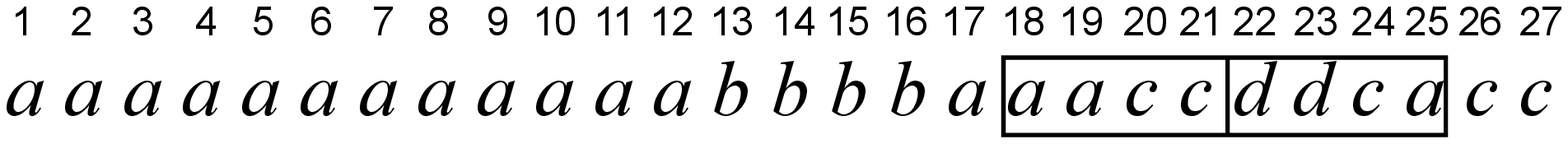}
}
\caption{$\Beg(L_{18}) = 18, \Beg(R_{18}) = 22, \End(R_{18})+1 = 26, w[\Beg(L_{18})] = a, w[\Beg(R_{18})] = d, w[\End(R_{18})+1] = c$}
\label{fig:step9}
\end{figure}
Next, see Figure~\ref{fig:step9}
where the left window has been shifted
to $L_{18} = w[18..21] = aacc$ and the right window has been shifted to
$R_{18} = w[20..25] = ddcc$.
Since $\min\{D_1^{18}, D_2^{18}, D_2^{18}\} = \min \{2, 2, 2\} = 2$,
the next break point is $\bp(18) = 18+2 =20$.
Since $\Diff(\Parikh_{L_{18}}, \Parikh_{R_{18}}) = 3$,
we use Lemma~\ref{lem:diff=3}.
Since $\Parikh_{L_{18}}[a] - \Parikh_{R_{18}}[a] = \Parikh_{L_{18}}[c] - \Parikh_{R_{18}}[c] = \frac{\Parikh_{R_{18}}[d] - \Parikh_{L_{18}}[d]}{2} = 1 \leq \min \{D_1^{18},D_2^{18},D_3^{18}\} = 2$,
it follows from Lemma~\ref{lem:diff=3} that
$w[19..26]$ is an Abelian square of length $2d = 8$.
We hence output $\langle 19, 19, 4 \rangle$.
We update $i \leftarrow \bp(19) = 20$,
and proceed to the next step.

\begin{figure}[!h]
\centering{
\includegraphics[width=9cm]{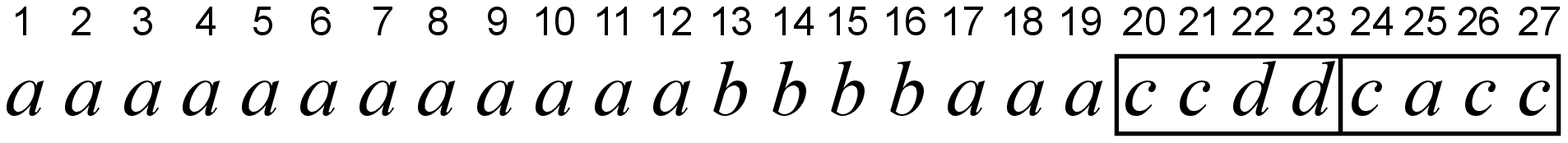}
}
\caption{$\Beg(L_{20}) = 20, \Beg(R_{20}) = 24, w[\Beg(L_{20})] = c, w[\Beg(R_{20})] = c$}
\label{fig:step10}
\end{figure}
Next, see Figure~\ref{fig:step10}
where the left window has been shifted
to $L_{20} = w[20..23] = ccdd$ and the right window has been shifted to
$R_{20} = w[24..27] = cacc$.
Since $\Diff(\Parikh_{L_{20}}, \Parikh_{R_{20}}) = 3$
the right end of the right window has reached
the last positions of the input string,
the algorithm terminates here.
Recall that this algorithm computed all the Abelian squares of length $2d = 8$
in this string.

\section{Computing longest common Abelian factors using RLEs}
\label{sec:lcaf}

In this section, we introduce our RLE-based
algorithm which computes longest common Abelian factors of two given strings
$w_1$ and $w_2$.

Formally, we solve the following problem.
Let $n = \min\{|w_1|, |w_2|\}$.
Given two strings $w_1$ and $w_2$, compute the length
$l = \max\{d \in [1, n] \mid 1 \leq \exists i \leq |w_1|, 1 \leq \exists k \leq |w_2|
\mbox{ s.t. } \Parikh_{w_1[i..i+d-1]} = \Parikh_{w_2[k..k+d-1]}\}$
of the longest common Abelian factor(s) of $w_1$ and $w_2$,
together with a pair $(i, k)$ of positions on $w_1$ and $w_2$
such that $\Parikh_{w_1[i..i+l-1]} = \Parikh_{w_2[k..k+l-1]}$.

\subsection{Alatabbi et al.'s $O(\sigma n^2)$-time algorithm}
Our algorithm uses an idea from
Alattabi et al.'s algorithm~\cite{alatabbi:_algor_longes_common_abelian_factor}.

For each window size $d$,
their algorithm computes the Parikh vectors of
all substrings of $w_1$ and $w_2$ of length $d$ in $O(\sigma n)$ time,
using two windows of length $d$ each.
Then they sort the Parikh vectors in $O(\sigma n)$ time,
and output the largest $d$ for which common Parikh vectors exist
for $w_1$ and $w_2$,
together with the lists of respective occurrences of longest common Abelian factors.

The total time requirement is clearly $O(\sigma n^2)$.

\subsection{Our $O(m^2n)$-time algorithm}
Our algorithm is different from Alattabi et al.'s algorithm in that
(1) we use RLEs of strings $w_1$ and $w_2$ and
(2) we avoid to sort the Parikh vectors.

As in the previous sections,
for a given window length $d~(1 \leq n)$,
we shift two windows of length $d$ over both $\RLE(w_1)$ and $\RLE(w_2)$,
and stops when we reach a break point of $\RLE(w_1)$ or $\RLE(w_2)$.
We then check if there is a common Abelian factor in the ranges of $w_1$ and $w_2$ we are looking at.

Since we use a single window for each of the input strings $w_1$ and $w_2$,
we need to modify the definition of the break points.
Let $U_i$ and $V_k$ be the sliding windows for $w_1$ and $w_2$
that are aligned at position $i$ of $w_1$ and at position $k$ of $w_2$,
respectively.
For each position $i \geq 1$ in $w_1$,
let $\bp_1(i) = i+ \min\{D_1^i, D_2^i\}$,
where $D_1^i = \Succ(\Beg(U_i)) - i$ and $D_2^i = \Succ(\End(U_i)) - i$.
For each position $k \geq 1$ in $w_2$,
$\bp_2(k)$ is defined analogously.
Let $p_l = \Beg(U_i)$, $p_r = \End(U_i)+1$, $q_l = \Beg(V_k)$ and $q_r = \End(V_k)+1$.

Consider an arbitrarily fixed window length $d$.
Assume that we have just shifted the window on $w_1$ from
position $i$ (i.e., $U_i = w_1[i..i+d-1]$) to 
the break point $\bp_1(i)$ (i.e., $U_{\bp_1(i)} = w_1[\bp_1(i)..\bp_1(i)+d-1]$).
Let $c_{p_l} = w_1[i]$ and $c_{p_r} = w_1[i+d]$ (see also Figure~\ref{fig:lcaf}).

\begin{figure}[t]
\centerline{
\includegraphics[width=7cm]{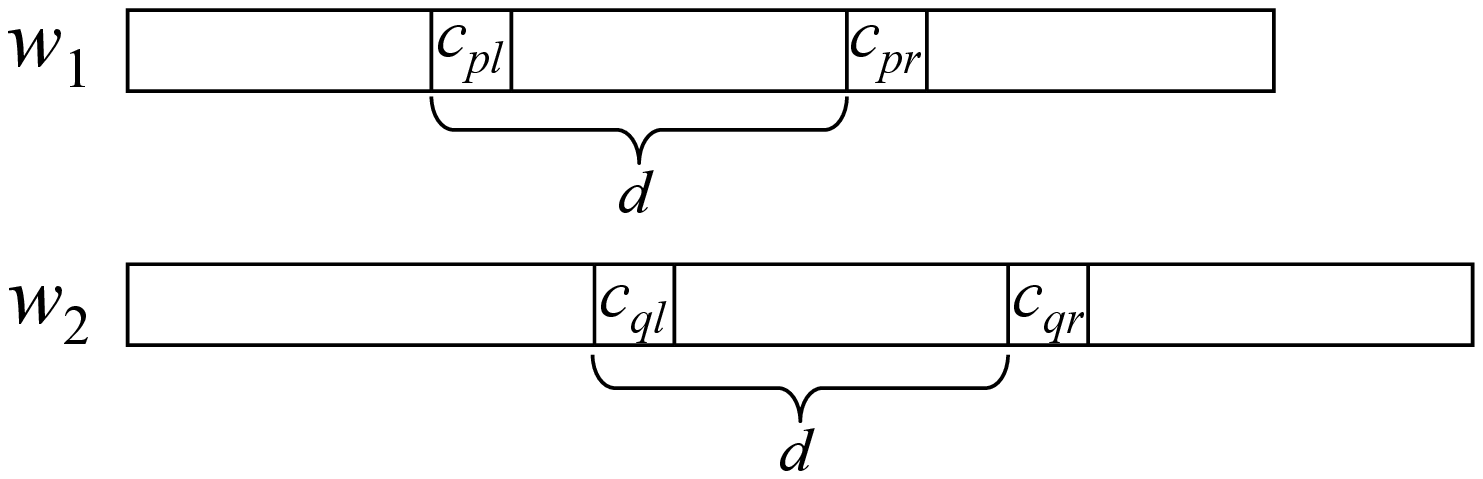}
}
\caption{Conceptual drawing of $c_{p_l}$, $c_{p_r}$, $c_{q_r}$, and $c_{q_l}$.}
\label{fig:lcaf}
\end{figure}

For characters $c_{p_l}$ and $c_{p_r}$,
we consider the minimum and maximum numbers of their occurrences 
during the slide from position $i$ to $\bp_1(i)$.
Let $min(p_l) = \Parikh_{w_1[\bp_1(i)..\bp_1(i)+d-1]}[p_l]$, $max(p_l) = \Parikh_{w_1[i..i+d-1]}[p_l]$, $min(p_r) = \Parikh_{w_1[i..i+d-1]}[p_r]$ and $max(p_r) = \Parikh_{w_1[\bp_1(i)..\bp_1(i)+d-1]}[p_r]$.
We will use these values to determine if there is a common Abelian factor
of length $d$ for $w_1$ and $w_2$.

Also, assume that we have just shifted the window on $w_2$ from
position $k$ (i.e., $V_k = w_2[k..k+d-1]$) to 
the break point $\bp_2(k)$ (i.e., $V_{\bp_2(k)} = w_2[\bp_2(k)..\bp_2(k)+d-1]$).

Let $c_{q_l} = w_2[k]$ and $c_{q_r} = w_2[k+d]$ (see also Figure~\ref{fig:lcaf}).
For characters $c_{q_l}$ and $c_{q_r}$,
we also consider the minimum and maximum numbers of occurrences of
of these characters during the slide from position $k$ to $\bp_2(k)$.
Let $min(q_l) = \Parikh_{w_2[\bp_2(k)..\bp_2(k)+d-1]}[q_l]$, $max(q_l) = \Parikh_{w_2[k..k+d-1]}[q_l]$, $min(q_r) = \Parikh_{w_2[k..k+d-1]}[q_r]$ and $max(q_r) = \Parikh_{w_2[\bp_2(k)..\bp_2(k)+d-1]}[q_r]$.

Let $m$ be the total size of $\RLE(w_1)$ and $\RLE(w_2)$,
and $l$ be the length of longest common Abelian factors of $w_1$ and $w_2$.
Our algorithm computes an $O(m^2)$-size representation
of every pair $(i, k)$ of positions for which 
$(w_1[i..i+l-1], w_2[k..k+l-1])$ is a longest common Abelian factor
of $w_1$ and $w_2$.

In the lemmas which follow, 
we assume that $\Parikh_{w_1[i..i+d-1]}[v] = \Parikh_{w_2[k..k+d-1]}[v]$ for any $v \in \{1,..,\sigma\} \setminus \{p_l,p_r,q_l,q_r\}$.
This is because, if this condition is not satisfied, then there cannot be 
an Abelian common factor of length $d$
for positions between $i$ to $\bp_1(i)$ in $w_1$
and position between $k$ to $\bp_2(k)$ in $w_2$.

\begin{lemma}\label{lem:all}
Assume $c_{p_l} = c_{p_r}$ and $c_{q_l} = c_{q_r}$.
Then, for any pair of positions $i \leq i' \leq \bp_1(i)$
and $k \leq k' \leq \bp_2(k)$,
$(w_1[i'..i'+d-1], w_2[k'..k'+d-1])$ is an Abelian common factor of length $d$
iff $\Parikh_{w_1[i..i+d-1]} = \Parikh_{w_2[k..k+d-1]}$.
\end{lemma}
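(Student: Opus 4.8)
We have $c_{p_l} = c_{p_r}$ (the character entering the left window on $w_1$ equals the character leaving it) and $c_{q_l} = c_{q_r}$ (same for $w_2$). We must show that over the entire slide range $i \le i' \le \bp_1(i)$ and $k \le k' \le \bp_2(k)$, a common Abelian factor of length $d$ exists at $(i',k')$ if and only if the Parikh vectors already agree at the base position, $\Parikh_{w_1[i..i+d-1]} = \Parikh_{w_2[k..k+d-1]}$.

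**Plan.** Let me think about the setup. Within the slide range on $w_1$, because we stop before reaching a break point, every position we pass is still inside the same RLE factors at both ends of the window. So as $i'$ increases by one, the window on $w_1$ loses one occurrence of $c_{p_l} = w_1[i]$ at its left end and gains one occurrence of $c_{p_r} = w_1[i+d]$ at its right end. The key observation is that since we are assuming $c_{p_l} = c_{p_r}$, these two changes cancel: the Parikh vector $\Parikh_{w_1[i'..i'+d-1]}$ is \emph{constant} for all $i \le i' \le \bp_1(i)$. The identical argument applied with the assumption $c_{q_l} = c_{q_r}$ shows $\Parikh_{w_2[k'..k'+d-1]}$ is constant over $k \le k' \le \bp_2(k)$. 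This invariance is the heart of the lemma.

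**Key steps in order.** First I would formalize the invariance claim for $w_1$: for each coordinate $v$, track how $\Parikh_{w_1[i'..i'+d-1]}[v]$ changes as $i' \to i'+1$. Only the entries for $c_{p_l}$ and $c_{p_r}$ can change, by $-1$ and $+1$ respectively; since $c_{p_l}=c_{p_r}$ the net change is zero in every coordinate, so the vector is invariant. Second I would state the symmetric conclusion for $w_2$ under $c_{q_l}=c_{q_r}$. Third, with both vectors pinned to their base values over the whole range, the biconditional is immediate: for \emph{any} $(i',k')$ in the range, $\Parikh_{w_1[i'..i'+d-1]} = \Parikh_{w_1[i..i+d-1]}$ and $\Parikh_{w_2[k'..k'+d-1]} = \Parikh_{w_2[k..k+d-1]}$, so $(w_1[i'..i'+d-1], w_2[k'..k'+d-1])$ is a common Abelian factor (i.e.\ the two Parikh vectors are equal) exactly when $\Parikh_{w_1[i..i+d-1]} = \Parikh_{w_2[k..k+d-1]}$. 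I would prove the two directions of the \emph{iff} explicitly: the ($\Leftarrow$) direction follows by substituting the base equality into the invariant vectors, and the ($\Rightarrow$) direction follows by reading the equality at $(i',k')$ back to the base position via the same invariance.

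**Main obstacle.** The proof is essentially bookkeeping once the invariance is established, so the only real care is in justifying that $c_{p_l}$ and $c_{p_r}$ are genuinely the only characters whose counts change during the slide, and similarly for $w_2$. This relies on the definition of $\bp_1(i) = i + \min\{D_1^i, D_2^i\}$ via $\Succ$: stopping strictly before a boundary guarantees that throughout the range the left end of the window stays within the run of $c_{p_l} = w_1[i]$ and the right end stays within the run of $c_{p_r} = w_1[i+d]$, so no third character can enter or leave. Making this boundary argument precise, and handling it symmetrically for the $w_2$ window so that the shared range condition $i \le i' \le \bp_1(i)$, $k \le k' \le \bp_2(k)$ is respected on both strings simultaneously, is where the argument needs to be stated carefully rather than waved through.
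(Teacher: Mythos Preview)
Your proposal is correct and follows essentially the same approach as the paper: both argue that since $c_{p_l}=c_{p_r}$ and $c_{q_l}=c_{q_r}$, the Parikh vectors of the two sliding windows remain constant throughout the respective slide ranges, from which the biconditional is immediate. Your write-up is more detailed than the paper's two-sentence proof, but the underlying idea is identical.
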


\begin{proof}
Since $c_{p_l} = c_{p_r}$ and $c_{q_l} = c_{q_r}$,
the Parikh vectors of the sliding windows do not change during the slides
from $i$ to $\bp_1(i)$ and from $k$ to $\bp_2(k)$.
Thus the lemma holds.
\end{proof}

\begin{lemma}\label{lem:II}
Assume $c_{p_l} = c_{q_l} \neq c_{p_r} = c_{q_r}$.
There is a common Abelian common factor $(w_1[i+x..i+x+d-1], w_2[k+y..k+y+d-1])$ of length $d$ iff 
$0 \leq x \leq \bp_1(i)-i$, $0 \leq y \leq \bp_2(k)-k$ and
$x-y = max(p_l)-max(q_l) = min(q_r)-min(p_r)$.
\end{lemma}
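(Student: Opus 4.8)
The plan is to analyze how the Parikh vectors of the two sliding windows change as we slide the window on $w_1$ from position $i$ to $\bp_1(i)$ and the window on $w_2$ from position $k$ to $\bp_2(k)$, under the standing assumption that the two Parikh vectors already agree on every coordinate outside $\{p_l, p_r, q_l, q_r\}$. Since $c_{p_l} = c_{q_l}$ (call this character $a$) and $c_{p_r} = c_{q_r}$ (call this character $b$), with $a \neq b$, only the counts of $a$ and $b$ can differ between the two windows. First I would record the elementary effect of one unit shift: advancing $U$ by one drops one occurrence of $a = c_{p_l}$ from the left end and adds one occurrence of $b = c_{p_r}$ at the right end, so after shifting by $x$ positions we have $\Parikh_{w_1[i+x..i+x+d-1]}[a] = max(p_l) - x$ and $\Parikh_{w_1[i+x..i+x+d-1]}[b] = min(p_r) + x$, valid for $0 \le x \le \bp_1(i) - i$ precisely because $\bp_1(i)$ is defined so that no RLE boundary is crossed before then. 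The analogous statement holds for $w_2$ with $y$, giving counts $max(q_l) - y$ and $min(q_r) + y$.

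**The equivalence.**
Next I would reduce ``$(w_1[i+x..i+x+d-1], w_2[k+y..k+y+d-1])$ is a common Abelian factor'' to a statement purely about the $a$- and $b$-coordinates. By the standing assumption the two Parikh vectors agree on all other coordinates and this agreement is unaffected by the slides (those coordinates are not touched), so the two windows are Abelian equivalent iff their $a$-counts are equal and their $b$-counts are equal. Writing these two equalities out using the formulas above gives
\begin{align*}
max(p_l) - x &= max(q_l) - y, \\
min(p_r) + x &= min(q_r) + y.
\end{align*}
I would then show each of these is equivalent to $x - y = max(p_l) - max(q_l)$ and to $x - y = min(q_r) - min(p_r)$, respectively, by rearranging. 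The remaining content is that these two expressions for $x-y$ must coincide; this follows because the windows had length $d$ throughout, so within each window the total change in the $a$-count is balanced against the $b$-count in a way that forces $max(p_l) + min(p_r)$ and $max(q_l) + min(q_r)$ to be comparable — more directly, the two equalities are simultaneously satisfiable exactly when $max(p_l) - max(q_l) = min(q_r) - min(p_r)$, which is the condition in the statement. Conversely, given that common value for $x-y$ together with the range constraints $0 \le x \le \bp_1(i)-i$ and $0 \le y \le \bp_2(k)-k$, both equalities hold and the factor is common. This establishes both directions.

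**Main obstacle.**
The step I expect to require the most care is verifying that the two derived conditions $x - y = max(p_l) - max(q_l)$ and $x - y = min(q_r) - min(p_r)$ are genuinely the \emph{same} linear condition rather than two independent ones — i.e.\ justifying why the lemma can package them with an equality chain. The subtlety is that this forces a consistency relation between the four min/max quantities that is not automatic from the slide formulas alone; it holds because of how the original windows $U_i$ and $V_k$ were assumed to relate (agreement outside $\{p_l,p_r,q_l,q_r\}$ combined with the fixed common length $d$). I would handle this by making explicit that if only one of the two equalities can hold, then there is simply no common Abelian factor in the relevant ranges, so the ``iff'' is vacuously consistent, whereas when a common factor does exist both equalities hold and hence the two right-hand sides must agree. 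Once that consistency is pinned down, the range constraints $0 \le x \le \bp_1(i)-i$ and $0 \le y \le \bp_2(k)-k$ merely delimit the validity of the count formulas and the proof closes.
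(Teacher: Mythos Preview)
Your proposal is correct and follows essentially the same route as the paper: both arguments record the slide formulas $\Parikh_{U_{i+x}}[p_l]=max(p_l)-x$, $\Parikh_{U_{i+x}}[p_r]=min(p_r)+x$ (and analogously for $V$), reduce Abelian equivalence to the pair of equalities $max(p_l)-x=max(q_l)-y$ and $min(p_r)+x=min(q_r)+y$, and rearrange to the stated condition on $x-y$. Your ``main obstacle'' discussion is sound but more than the paper does---the paper simply carries the equality $max(p_l)-max(q_l)=min(q_r)-min(p_r)$ as part of the iff without observing (as you essentially do) that under the standing assumption and equal window length it is automatic.
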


\begin{proof}\label{pro:II}
During the slide of the window on $w_1$,
the number of occurrences of $c_{p_l}$ decreases and
that of $c_{p_r}$ increases.
That is, $\Parikh_{w_1[i+x..i+x+d-1]}[p_l] = \Parikh_{w_1[i..i+d-1]}[p_l] - x = max(p_l) - x$ and $\Parikh_{w_1[i+x..i+x+d-1]}[p_r] = \Parikh_{w_1[i..i+d-1]}[p_r] + x = min(p_r) + x$.
On the other hand, during the slide of the window on $w_2$,
the number of occurrence of $c_{q_l}$ decreases and that of $c_{q_r}$ increases.
That is, $\Parikh_{w_2[k+y..k+y+d-1]}[q_l] = \Parikh_{w_2[k..k+d-1]}[q_l] - y = max(q_l) - y$ and $\Parikh_{w_2[k+y..k+y+d-1]}[p_r] = \Parikh_{w_2[k..k+d-1]}[q_r] + y = min(q_r) + y$.

Assume a pair $(w_1[i+x..i+x+d-1] , w_2[k+y..k+y+d-1])$ is a common Abelian factor of length $d$.
Then, $\Parikh_{w_1[i+x..i+x+d-1]}[p_l] = \Parikh_{w_2[k+y..k+y+d-1]}[q_l]$ and $\Parikh_{w_1[i+x..i+x+d-1]}[p_r] = \Parikh_{w_2[k+y..k+y+d-1]}[q_r]$, that is, $max(p_l) - x = max(q_l) \\ - y$ and $min(p_r) + x = min(q_r) + y$. Therefore $x-y = max(p_l)-max(q_l) = min(q_r)-min(p_r)$.

Assume that $x-y = max(p_l)-max(q_l) = min(q_r)-min(p_r)$.
Then, we have that $max(p_l) - max(q_l) = \Parikh_{w_1[i..i+d-1]}[p_l] - \Parikh_{w_2[k..k+d-1]}[q_l] = \Parikh_{w_1[i+x..i+x+d-1]}[p_l]+x - \Parikh_{w_2[k+y..k+y+d-1]}[q_l]-y = x-y$,
that is, $\Parikh_{w_1[i+x..i+x+d-1]}[p_l] = \Parikh_{w_2[k+y..k+y+d-1]}[q_l]$.
Also, we have that $min(q_r) - min(p_r) = \Parikh_{w_2[k..k+d-1]}[q_r] - \Parikh_{w_1[i..i+d-1]}[p_r] = \Parikh_{w_2[k+y..k+y+d-1]}[q_r]-y - \Parikh_{w_1[i+x..i+x+d-1]}[p_r]+x = x-y$, that is, $\Parikh_{w_2[k+y..k+y+d-1]}[q_r] = \Parikh_{w_1[i+x..i+x+d-1]}[p_r]$.
Therefore, a pair $(w_1[i+x..i+x+d-1] , w_2[k+y..k+y+d-1])$ is a common Abelian factor of $w_1$ and $w_2$.
\end{proof}

\begin{lemma}\label{lem:left}
Assume $c_{p_r} \neq c_{p_l} = c_{q_l} \neq c_{q_r}$ and $c_{p_r} \neq c_{q_r}$.
There is a common Abelian factor $(w_1[i+x..i+x+d-1] , w_2[k+y..k+y+d-1])$ of length $d$iff $\bp_1(i)-i \geq x = \Parikh_{w_2[k..k+d-1]}[p_r] - min(p_r) \geq 0$, $\bp_2(k)-k \geq y = \Parikh_{w_1[i..i+d-1]}[q_r] - min(q_r) \geq 0$ and $\Parikh_{w_1[i..i+d-1]}[p_l] - x = \Parikh_{w_2[k..k+d-1]}[q_l] - y$.
\end{lemma}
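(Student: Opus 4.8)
The plan is to mirror the structure of the proofs of Lemmas~\ref{lem:all} and~\ref{lem:II}, tracking how the Parikh vectors of the two sliding windows evolve during the simultaneous slides. The hypothesis $c_{p_r} \neq c_{p_l} = c_{q_l} \neq c_{q_r}$ together with $c_{p_r} \neq c_{q_r}$ distinguishes this case from Lemma~\ref{lem:II}: here the \emph{left} characters entering the two windows agree ($c_{p_l} = c_{q_l}$), but the \emph{right} characters $c_{p_r}$ and $c_{q_r}$ are mutually distinct and distinct from the shared left character. First I would record the four arithmetic identities governing the slides, exactly as in the proof of Lemma~\ref{lem:II}. As the window on $w_1$ shifts by $x$ positions (for $0 \leq x \leq \bp_1(i)-i$), the count of $c_{p_l}$ drops by $x$ and the count of $c_{p_r}$ rises by $x$, so $\Parikh_{w_1[i+x..i+x+d-1]}[p_l] = \Parikh_{w_1[i..i+d-1]}[p_l] - x$ and $\Parikh_{w_1[i+x..i+x+d-1]}[p_r] = \Parikh_{w_1[i..i+d-1]}[p_r] + x = min(p_r) + x$. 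Symmetrically, shifting the window on $w_2$ by $y$ positions gives $\Parikh_{w_2[k+y..k+y+d-1]}[q_l] = \Parikh_{w_2[k..k+d-1]}[q_l] - y$ and $\Parikh_{w_2[k+y..k+y+d-1]}[q_r] = min(q_r) + y$.

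\textbf{The key structural observation.} The crucial difference from Lemma~\ref{lem:II} is that $c_{p_r}$ and $c_{q_r}$ are now \emph{four} distinct characters overall (together with the shared $c_{p_l} = c_{q_l}$), so for a common Abelian factor we must equate the counts of $c_{p_r}$ and of $c_{q_r}$ \emph{separately}, rather than pairing them as a single balance. Since $c_{p_r}$ does not occur as an entering character on the $w_2$ side, the count of $c_{p_r}$ in the $w_2$ window is frozen during its slide; matching the $c_{p_r}$-entries therefore forces $min(p_r) + x = \Parikh_{w_2[k..k+d-1]}[p_r]$, i.e. $x = \Parikh_{w_2[k..k+d-1]}[p_r] - min(p_r)$, a value completely determined by the initial configurations. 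By the same reasoning applied to $c_{q_r}$ on the $w_1$ side, $y = \Parikh_{w_1[i..i+d-1]}[q_r] - min(q_r)$ is also forced. I would present these two equalities as the first half of the equivalence, noting that both $x$ and $y$ are uniquely pinned down (this is what makes the case tractable), and that they must satisfy $0 \leq x \leq \bp_1(i)-i$ and $0 \leq y \leq \bp_2(k)-k$ for the shifted windows to lie within the current break-point intervals.

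\textbf{Completing the equivalence.} With $x$ and $y$ so determined, the remaining character to reconcile is the shared left character $c_{p_l} = c_{q_l}$, whose counts are $\Parikh_{w_1[i..i+d-1]}[p_l] - x$ on the $w_1$ side and $\Parikh_{w_2[k..k+d-1]}[q_l] - y$ on the $w_2$ side; equating these yields the third stated condition $\Parikh_{w_1[i..i+d-1]}[p_l] - x = \Parikh_{w_2[k..k+d-1]}[q_l] - y$. For the forward direction I would assume the pair is a common Abelian factor and read off all three conditions by comparing entries character-by-character, invoking the standing assumption that the two initial windows already agree on every coordinate outside $\{p_l, p_r, q_l, q_r\}$. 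For the converse I would assume the three conditions and verify that $\Parikh_{w_1[i+x..i+x+d-1]}$ and $\Parikh_{w_2[k+y..k+y+d-1]}$ agree on each of $p_l$, $p_r$, $q_l$, $q_r$ (and hence everywhere), concluding Abelian equivalence. The main subtlety — not a deep obstacle, but the place to be careful — is keeping the bookkeeping of which of the four characters is ``frozen'' on which side straight, since unlike Lemma~\ref{lem:II} the right characters no longer cancel in a single difference and each must be balanced independently; getting the direction of these freezing constraints right is what determines the two forced values of $x$ and $y$.
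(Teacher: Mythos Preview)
Your proposal is correct and follows essentially the same argument as the paper's proof: observe that $c_{p_r}$ is frozen in the $w_2$ window and $c_{q_r}$ is frozen in the $w_1$ window during the slides, derive the forced values of $x$ and $y$ from those two equalities, and then match the shared left character $c_{p_l}=c_{q_l}$. One small slip in wording: there are three distinct characters here, not four (the four symbols $c_{p_l},c_{p_r},c_{q_l},c_{q_r}$ take only three distinct values since $c_{p_l}=c_{q_l}$), but this does not affect the argument.
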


\begin{proof}\label{pro:left}
During the slides of the windows on $w_1$ and $w_2$,
the numbers of occurrences of $c_{q_r}$ in $w_1$ and $c_{p_r}$ in $w_2$
do not change.

Assume there is a common Abelian factor $(w_1[i+x..i+x+d-1] , w_2[k+y..k+y+d-1])$ of length $d$.
Clearly $0 \leq x \leq \bp_1(i)-i$ and $0 \leq y \leq \bp_2(k)-k$.
Then, we have
$\Parikh_{w_1[i+x..i+x+d-1]}[p_r] = \Parikh_{w_2[k+y..k+y+d-1]}[p_r]$, $ \Parikh_{w_2[k+y..k+y+d-1]}[q_r] = \Parikh_{w_1[i+x..i+x+d-1]}[q_r]$ and $\Parikh_{w_1[i+x..i+x+d-1]}[p_l]= \Parikh_{w_2[k+y..k+y+d-1]}[q_l]$, that is, $min(p_r) + x = \Parikh_{w_2[k..k+d-1]}[p_r]$, $min(q_r) + y = \Parikh_{w_1[i..i+d-1]}[q_r]$ and $\Parikh_{w_1[i..i+d-1]}$ $[p_l] - x = \Parikh_{w_2[k..k+d-1]}[q_l] - y$.
Consequently, we obtain $x = \Parikh_{w_2[k..k+d-1]}[p_r] - min(p_r)$ and $y = \Parikh_{w_1[i..i+d-1]}[q_r] - min(q_r)$.

Assume that $\bp_1(i)-i \geq x = \Parikh_{w_2[k..k+d-1]}[p_r] - min(p_r) \geq 0$, 
$\bp_2(k)-k \geq y = \Parikh_{w_1[i..i+d-1]}[q_r] - min(q_r) \geq 0$ 
and $\Parikh_{w_1[i..i+d-1]}[p_l] - x = \Parikh_{w_2[k..k+d-1]}[q_l] - y$.
Then, we have that 
$min (p_r) + x = \Parikh_{w_1[i..i+d-1]}[p_r] + x = \Parikh_{w_1[i+x..i+x+d-1]}[p_r] = \Parikh_{w_2[k..k+d-1]}[p_r] = \Parikh_{w_2[k+y..k+y+d-1]}[p_r]$,
$min(q_r) + y = \Parikh_{w_2[k..k+d-1]}[q_r] + y = \Parikh_{w_2[k+y..k+y+d-1]}[q_r] =\Parikh_{w_1[i..i+d-1]}[q_r] = \Parikh_{w_1[i+x..i+x+d-1]}[q_r]$ 
and $\Parikh_{w_1[i+x..i+x+d-1]}[p_l] = \Parikh_{w_2[k+y..k+y+d-1]}[q_l]$.
Therefore, a pair $(w_1[i+x..i+x+d-1] , w_2[k+y..k+y+d-1])$ is a common Abelian factor of $w_1$ and $w_2$.
\end{proof}

\begin{lemma}\label{lem:right}
Assume $c_{p_l} \neq c_{p_r} = c_{q_r} \neq c_{q_l}$ and $c_{p_l} \neq c_{q_l}$.
There is a common Abelian factor $(w_1[i+x..i+x+d-1] , w_2[k+y..k+y+d-1])$ of length $d$ iff $x = max(p_l) - \Parikh_{w_2[k..k+d-1]}[p_l] \geq 0$, $y = max(q_l) - \Parikh_{w_1[i..i+d-1]}[q_l] \geq 0$ and $\Parikh_{w_1[i..i+d-1]}[p_r] + x = \Parikh_{w_2[k..k+d-1]}[q_r] + y$.
\end{lemma}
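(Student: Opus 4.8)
The plan is to treat this as the symmetric counterpart of the proof of Lemma~\ref{lem:left}, with the roles of the left and right characters exchanged: there the two \emph{left} characters coincided, whereas here the two \emph{right} characters coincide, i.e.\ $c_{p_r}=c_{q_r}$. The first thing I would pin down is the arithmetic of the three relevant characters along the two slides. As the window on $w_1$ moves from $i$ to $i+x$ (for $0\le x\le\bp_1(i)-i$), the count of $c_{p_l}$ decreases by $x$ and the count of $c_{p_r}$ increases by $x$; as the window on $w_2$ moves from $k$ to $k+y$ (for $0\le y\le\bp_2(k)-k$), the count of $c_{q_l}$ decreases by $y$ and the count of $c_{q_r}=c_{p_r}$ increases by $y$. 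The structural fact I would establish before anything else is that $c_{p_l}$, $c_{q_l}$, and $c_{p_r}(=c_{q_r})$ are pairwise distinct: $c_{p_l}\neq c_{q_l}$ is assumed, $c_{p_l}\neq c_{p_r}$ is assumed, and $c_{q_l}\neq c_{q_r}=c_{p_r}$ is assumed. Hence the count of $c_{q_l}$ is constant throughout the $w_1$-slide and the count of $c_{p_l}$ is constant throughout the $w_2$-slide.

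Using the standing assumption that precedes the lemmas, namely that $\Parikh_{w_1[i..i+d-1]}$ and $\Parikh_{w_2[k..k+d-1]}$ already agree on every character outside $\{p_l,p_r,q_l,q_r\}$, I would reduce Abelian equivalence of $w_1[i+x..i+x+d-1]$ and $w_2[k+y..k+y+d-1]$ to equality of the Parikh entries at exactly the three characters above. Matching the $c_{p_l}$ entry, which is $max(p_l)-x$ on the $w_1$ side and fixed at $\Parikh_{w_2[k..k+d-1]}[p_l]$ on the $w_2$ side, yields the stated value of $x$; matching the $c_{q_l}$ entry, fixed at $\Parikh_{w_1[i..i+d-1]}[q_l]$ on the $w_1$ side and equal to $max(q_l)-y$ on the $w_2$ side, yields the stated value of $y$; and matching the shared character $c_{p_r}=c_{q_r}$, whose entries are $min(p_r)+x$ and $min(q_r)+y$, yields the third condition $\Parikh_{w_1[i..i+d-1]}[p_r]+x=\Parikh_{w_2[k..k+d-1]}[q_r]+y$.

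Both directions of the equivalence are then short substitutions carried out exactly as in the proof of Lemma~\ref{lem:left}: the forward direction reads the three equalities off from the assumed coincidence of the two Parikh vectors, while the backward direction substitutes the prescribed $x$ and $y$ and checks coordinate by coordinate that equality is restored. I expect the only genuinely delicate point to be the pairwise-distinctness bookkeeping of the first paragraph, since it is precisely what decouples the $x$-equation from the $y$-equation and lets each be solved in closed form; the nonnegativity constraints $x\ge 0$, $y\ge 0$ together with the segment bounds $x\le\bp_1(i)-i$ and $y\le\bp_2(k)-k$ then come essentially for free, as they merely express that the computed alignment actually lies inside the current break-point windows on the two strings.
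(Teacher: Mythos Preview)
Your proposal is correct and is precisely the approach the paper takes: it does not spell out a separate proof but simply notes that Lemma~\ref{lem:right} ``can be proved by a similar argument to the proof of Lemma~\ref{lem:left},'' i.e.\ by the left/right symmetry you describe. Your identification of the pairwise distinctness of $c_{p_l}$, $c_{q_l}$, and $c_{p_r}=c_{q_r}$ as the point that decouples the $x$- and $y$-equations is exactly the relevant bookkeeping.
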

Lemma~\ref{lem:right} can be proved by a similar argument to the proof of Lemma~\ref{lem:left}.

\begin{lemma}\label{lem:x}
Assume $c_{p_l} = c_{q_r} \neq c_{p_r} = c_{q_l}$.
There is a common Abelian factor $(w_1[i+x..i+x+d-1] , w_2[k+y..k+y+d-1])$
of length $d$ iff $x + y = min(p_r) - max(q_l) = max(q_l) - min(p_r)$, $0 \leq x \leq \bp_1(i)-i$ and $0 \leq y \leq \bp_2(k)-k$.
\end{lemma}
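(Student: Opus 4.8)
The plan is to treat this as the ``crossing'' counterpart of Lemma~\ref{lem:II}: here the character leaving the left end of $U_i$ coincides with the one entering the right end of $V_k$, and symmetrically the character entering the right end of $U_i$ coincides with the one leaving the left end of $V_k$. Writing $a = c_{p_l} = c_{q_r}$ and $b = c_{p_r} = c_{q_l}$ (with $a \neq b$), I would first record which counts move and in which direction over the admissible slides $0 \le x \le \bp_1(i)-i$ and $0 \le y \le \bp_2(k)-k$. Shifting $U_i$ to $U_{i+x}$ decreases the count of $a$ by $x$ and increases the count of $b$ by $x$, while shifting $V_k$ to $V_{k+y}$ increases the count of $a$ by $y$ and decreases the count of $b$ by $y$. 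No other character is touched by either slide, so by the standing assumption that the two length-$d$ windows already agree on every coordinate $v \notin \{p_l,p_r,q_l,q_r\}$, all remaining coordinates stay equal and can be discarded. These linear formulas are valid precisely because, up to the respective break points, both endpoints of each window remain inside a single RLE run, which is exactly what the definitions of $\bp_1$ and $\bp_2$ guarantee.

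This reduces Abelian equivalence of the two shifted windows to two scalar equations. For the forward direction I would assume $(w_1[i+x..i+x+d-1], w_2[k+y..k+y+d-1])$ is a common Abelian factor and equate the $a$-coordinates, giving $max(p_l) - x = min(q_r) + y$, and the $b$-coordinates, giving $min(p_r) + x = max(q_l) - y$. Both rearrange to fix the single quantity $x+y$, namely $x+y = max(p_l) - min(q_r)$ and $x+y = max(q_l) - min(p_r)$. The structural point that separates this case from Lemma~\ref{lem:II}, where the constraint pinned down $x-y$, is that the two slides push each shared character in reinforcing directions, so it is the sum $x+y$ rather than the difference that is determined.

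For the converse I would substitute the hypothesised common value of $x+y$ back into the two displayed identities and verify coordinate by coordinate that $\Parikh_{w_1[i+x..i+x+d-1]}$ and $\Parikh_{w_2[k+y..k+y+d-1]}$ agree on $a$, on $b$, and (by the standing assumption) on every other character; the range hypotheses on $x$ and $y$ ensure the windows have only slid through the single runs over which these identities hold, so the conclusion that the pair is a common Abelian factor of length $d$ follows.

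The step I expect to require the most care is reconciling the two scalar equations: a solution exists only when the two expressions for $x+y$ agree, i.e.\ when $max(p_l) - min(q_r) = max(q_l) - min(p_r)$, so the correctly stated condition should read $x+y = max(p_l) - min(q_r) = max(q_l) - min(p_r)$. (The printed right-hand side $min(p_r) - max(q_l) = max(q_l) - min(p_r)$ appears to be a typo, since those two quantities are negatives of one another and would force $x+y = 0$.) Once this consistency holds, the admissible pairs form a one-parameter family --- any $x$ in range determines $y$ via $x+y = max(q_l) - min(p_r)$ --- which is exactly the $O(1)$-describable set the algorithm needs.
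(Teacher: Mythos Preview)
Your proposal is correct and follows essentially the same approach as the paper: track how the $a$- and $b$-counts change linearly under the two slides, equate the two relevant coordinates to obtain $x+y = max(p_l) - min(q_r)$ and $x+y = max(q_l) - min(p_r)$, and then substitute back for the converse. You are also right about the typo in the stated condition---the paper's own proof silently uses the corrected form $x+y = max(q_l) - min(p_r) = max(p_l) - min(q_r)$, exactly as you derived.
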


\begin{proof}\label{pro:x}
When the window on $w_1$ slides by $x$ positions,
the occurrence of $c_{p_l}$ in the window decreases by $x$ and
the occurrence of $c_{p_r}$ in the window increases by $x$.
When the window on $w_2$ slides by $y$ positions,
the occurrence of $c_{q_l}$ in the window decreases by $y$
and the occurrence of $c_{q_r}$ in the window increases by $y$.

Assume there is a common Abelian factor $(w_1[i+x..i+x+d-1] , w_2[k+y..k+y+d-1])$.
Then $\Parikh_{w_1[i+x..i+x+d-1]}[p_r] = \Parikh_{w_1[i..i+d-1]}[p_r] + x = min(p_r) + x$,
$\Parikh_{w_2[k+y..k+y+d-1]}[q_l] = \Parikh_{w_2[k..k+d-1]}[q_l] - y = max(q_l) - y$,
$\Parikh_{w_1[i+x..i+x+d-1]}[p_l] = \Parikh_{w_1[i..i+d-1]}[p_l] - x = max(p_l) - x$ and 
$\Parikh_{w_2[k+y..k+y+d-1]}[q_r] = \Parikh_{w_2[k..k+d-1]}[q_r] + y = min(q_r) + y$.
Therefore $\Parikh_{w_1[i+x..i+x+d-1]}[p_r] = \Parikh_{w_2[k+y..k+y+d-1]}[q_l] \Leftrightarrow x + y = max(q_l) - min(p_r)$
and 
$\Parikh_{w_1[i+x..i+x+d-1]}[p_l]$ $=\Parikh_{w_2[k+y..k+y+d-1]}[q_r] \Leftrightarrow x + y = max(p_l) - min(q_r)$.

Assume $x + y = max(q_l) - min(p_r) = max(p_l) - min(q_r)$.
Clearly $0 \leq x \leq \bp_1(i)-i$ and $0 \leq y \leq \bp_2(k)-k$.
Then $max(q_l) - y = min(p_r) + x$ and $max(p_l)-x = min(q_r)+y$, 
that is, $\Parikh_{w_2[k+y..k+y+d-1]}[q_l] = \Parikh_{w_1[i+x..i+x+d-1]}[p_r]$ and $\Parikh_{w_1[i+x..i+x+d-1]}[p_l] = \Parikh_{w_2[k+y..k+y+d-1]}[q_r]$.
Therefore a pair $(w_1[i+x..i+x+d-1] , w_2[k+y..k+y+d-1])$ is a common Abelian factor of $w_1$ and $w_2$.
\end{proof}

\begin{lemma}\label{lem:barabara}
Assume $c_{p_l}$, $c_{p_r}$, $c_{q_l}$ and $c_{q_r}$ are mutually distinct.
There is a common Abelian factor $(w_1[i+x..i+x+d-1] , w_2[k+y..k+y+d-1])$ of length $d$ iff $0 \leq x = max(p_l) - \Parikh_{w_2[k..k+d-1]}[p_l] = \Parikh_{w_2[k..k+d-1]}[p_r] - min(p_r) \leq \bp_1(i)-i$ and $0 \leq y = max(q_l) - \Parikh_{w_1[i..i+d-1]}[q_l] = \Parikh_{w_1[i..i+d-1]}[q_r] - min(q_r) \leq \bp_2(k)-k$.
\end{lemma}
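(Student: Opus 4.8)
The plan is to exploit the fact that, because $c_{p_l}$, $c_{p_r}$, $c_{q_l}$, $c_{q_r}$ are four \emph{mutually distinct} characters, the two slides decouple completely. During the slide of the window on $w_1$ from $i$ to $\bp_1(i)$ only the counts of $c_{p_l}$ and $c_{p_r}$ change, while the counts of $c_{q_l}$ and $c_{q_r}$ in that window stay constant; symmetrically, during the slide on $w_2$ from $k$ to $\bp_2(k)$ only the counts of $c_{q_l}$ and $c_{q_r}$ change, and the counts of $c_{p_l}$ and $c_{p_r}$ there stay constant. First I would record the explicit linear formulas for the active entries: for $0 \leq x \leq \bp_1(i)-i$ we have $\Parikh_{w_1[i+x..i+x+d-1]}[p_l] = max(p_l) - x$ and $\Parikh_{w_1[i+x..i+x+d-1]}[p_r] = min(p_r) + x$, with the $q_l$- and $q_r$-entries of the $w_1$-window frozen at their initial values, and the symmetric statement for the $w_2$-window as a function of $y$.

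Second, for the ($\Rightarrow$) direction I would assume $(w_1[i+x..i+x+d-1], w_2[k+y..k+y+d-1])$ is a common Abelian factor and equate the two Parikh vectors coordinate by coordinate. Every coordinate outside $\{p_l,p_r,q_l,q_r\}$ is automatically equal by the standing assumption stated just before the lemmas. Equality at coordinate $c_{p_l}$ reads $max(p_l) - x = \Parikh_{w_2[k..k+d-1]}[p_l]$ (the $w_2$-window count of $c_{p_l}$ being independent of $y$ since $c_{p_l} \notin \{c_{q_l}, c_{q_r}\}$), and equality at $c_{p_r}$ reads $min(p_r) + x = \Parikh_{w_2[k..k+d-1]}[p_r]$; solving, these yield the displayed double equation for $x$. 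The symmetric computation at $c_{q_l}$ and $c_{q_r}$ yields the displayed double equation for $y$, and the range bounds follow since $i+x$ and $k+y$ must lie within the respective slide windows.

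Third, the ($\Leftarrow$) direction simply reverses these substitutions: assuming the displayed equalities for $x$ and $y$, I would plug them back into the linear formulas to verify that the four active coordinates of the two windows coincide, and invoke the standing assumption for all remaining coordinates, concluding that the full Parikh vectors are equal and hence the pair is a common Abelian factor of length $d$.

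I expect the only real subtlety to be conceptual rather than computational: making explicit \emph{why} the distinctness hypothesis forces $x$ and $y$ to be \emph{independently determined}, each overdetermined by two equations, in sharp contrast to Lemma~\ref{lem:II} and Lemma~\ref{lem:x}, where shared characters couple $x$ and $y$ through a single relation and leave a one-parameter family of solutions. Once the decoupling is stated cleanly, the two consistency conditions ``$x$ equals both $max(p_l) - \Parikh_{w_2[k..k+d-1]}[p_l]$ and $\Parikh_{w_2[k..k+d-1]}[p_r] - min(p_r)$'' (and symmetrically for $y$) are precisely the content of the lemma, and the remainder is routine bookkeeping.
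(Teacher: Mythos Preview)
Your proposal is correct and follows essentially the same approach as the paper: both proofs first observe that, by mutual distinctness, the $c_{q_l},c_{q_r}$-counts in the $w_1$-window and the $c_{p_l},c_{p_r}$-counts in the $w_2$-window stay constant during the respective slides, then write the active coordinates as linear functions of $x$ and $y$, and finally equate coordinates in each direction to obtain (respectively verify) the displayed conditions. Your write-up is in fact more explicit than the paper's about why the distinctness hypothesis forces $x$ and $y$ to be independently and doubly determined.
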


\begin{proof}\label{pro:barabara}
  During the slides,
  the numbers of occurrences of $c_{q_l}$ and $c_{q_r}$ in the window on $w_1$
  do not change,
  and those of $c_{p_l}$ and $c_{p_r}$ in the window on $w_2$ do not change.

  Assume there is a common Abelian factor $(w_1[i+x..i+x+d-1], w_2[k+y..k+y+d-1])$.
Then,
$\Parikh_{w_1[i+x..i+x+d-1]}[p_r] = min(p_r) + x = \Parikh_{w_2}[p_r]$,
$\Parikh_{w_1[i+x..i+x+d-1]}$ $[p_l] = max(p_l) - x = \Parikh_{w_2}[p_l]$,
$\Parikh_{w_2[k+y..k+y+d-1]}[q_r] = min(q_r) + y = \Parikh_{w_1}[q_r]$ and
$\Parikh_{w_2[k+y..k+y+d-1]}[q_l] = max(q_l) - y = \Parikh_{w_1}[q_l]$
$\Leftrightarrow$
$0 \leq x = max(p_l) - \Parikh_{w_2}[p_l] = \Parikh_{w_2}[p_r] - min(p_r) \leq \bp_1(i)-i$ and
$0 \leq y = max(q_l) - \Parikh_{w_1}[q_l] = \Parikh_{w_1}[q_r] - min(q_r) \leq \bp_2(k)-k$.

Assume $0 \leq x = max(p_l) - \Parikh_{w_2[k..k+d-1]}[p_l] = \Parikh_{w_2[k..k+d-1]}[p_r] - min(p_r) \leq \bp_1(i)-i$ and $0 \leq y = max(q_l) - \Parikh_{w_1[i..i+d-1]}[q_l] = \Parikh_{w_1[i..i+d-1]}[q_r] - min(q_r) \leq \bp_2(k)-k$.
Then, $x = \Parikh_{w_1[i..i+d-1]}[p_l] - \Parikh_{w_2[k..k+d-1]}[p_l] = \Parikh_{w_2[k..k+d-1]}[p_r] - \Parikh_{w_2[k..k+d-1]}[q_r]$ 
and $y = \Parikh_{w_2[k..k+d-1]}[q_l] - \Parikh_{w_1[i..i+d-1]}[q_l] = \Parikh_{w_1[i..i+d-1]}[q_r] - \Parikh_{w_2[k..k+d-1]}[q_r]$.
That is, $\Parikh_{w_2[k..k+d-1]}[p_l] = \Parikh_{w_2[k+y..k+y+d-1]}[p_l]= \Parikh_{w_1[i..i+d-1]}$ $[p_l] - x = \Parikh_{w_1[i+x..i+x+d-1]}[p_l]$,
$\Parikh_{w_2[k..k+d-1]}[p_r] = \Parikh_{w_2[k+y..k+y+d-1]}[p_r] = \Parikh_{w_1[i..i+d-1]}[p_r] + x = \Parikh_{w_1[i+x..i+x+d-1]}[p_r]$,
$\Parikh_{w_1[i..i+d-1]}[q_l] = \Parikh_{w_1[i+x..i+x+d-1]}$ $[q_l] = \Parikh_{w_2[k..k+d-1]}[q_l] - y = \Parikh_{w_2[k+y..k+y+d-1]}[q_l]$,
and $\Parikh_{w_1[i..i+d-1]}[q_r] = \Parikh_{w_1[i+x..i+x+d-1]}[q_r] = \Parikh_{w_2[k..k+d-1]}[q_r] + y = \Parikh_{w_2[k+y..k+y+d-1]}[q_r]$.
Therefore, a pair $(w_1[i+x..i+x+d-1], w_2[k+y..k+y+d-1])$ is a common Abelian factor of $w_1$ and $w_2$.
\end{proof}

\begin{lemma}\label{lem:p-s}
Assume $c_{q_l} \neq c_{p_l} = c_{p_r} \neq c_{q_r}$ and $c_{q_l} \neq c_{q_r}$.
There is a common Abelian factor $(w_1[i+x..i+x+d-1], w_2[k+y..k+y+d-1])$ of length $d$ iff  $0 \leq x \leq \bp_1(i)-i$, $0 \leq y = max(q_l) - \Parikh_{w_1[i..i+d-1]}[q_l] = \Parikh_{w_1[i..i+d-1]}[q_r] - min(q_r) \leq \bp_2(k)-k$ and $\Parikh_{w_1[i..i+d-1]}[p_l] = \Parikh_{w_2[k..k+d-1]}[p_l]$.
\end{lemma}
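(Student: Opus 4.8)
The plan is to follow the template of the earlier case lemmas, reducing the equality of two full Parikh vectors to a short list of scalar conditions by tracking only the distinguished characters. Throughout I rely on the blanket assumption that $\Parikh_{w_1[i..i+d-1]}[v] = \Parikh_{w_2[k..k+d-1]}[v]$ for every $v \notin \{p_l, p_r, q_l, q_r\}$, so a length-$d$ common Abelian factor can fail only on these entries. Under the hypothesis $c_{q_l} \neq c_{p_l} = c_{p_r} \neq c_{q_r}$ together with $c_{q_l} \neq c_{q_r}$, the distinguished symbols collapse to three distinct characters: $c_p := c_{p_l} = c_{p_r}$, and the two distinct characters $c_{q_l}$, $c_{q_r}$.

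First I would record how each slide moves the relevant Parikh entries, exactly as in the frozen-window analysis of Lemma~\ref{lem:all}. Because the window on $w_1$ both loses and gains an occurrence of the \emph{same} character $c_p$ at every unit shift, its Parikh vector is invariant: $\Parikh_{w_1[i+x..i+x+d-1]} = \Parikh_{w_1[i..i+d-1]}$ for all $0 \le x \le \bp_1(i)-i$. On $w_2$, a slide by $y$ decreases the count of $c_{q_l}$ by $y$ and increases that of $c_{q_r}$ by $y$, and since neither symbol equals $c_p$ no other entry (in particular that of $c_p$) moves; explicitly $\Parikh_{w_2[k+y..k+y+d-1]}[q_l] = max(q_l) - y$ and $\Parikh_{w_2[k+y..k+y+d-1]}[q_r] = min(q_r) + y$.

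Next I would impose equality on the three distinguished entries in turn. For $c_p$ both windows are frozen, so the counts agree iff $\Parikh_{w_1[i..i+d-1]}[p_l] = \Parikh_{w_2[k..k+d-1]}[p_l]$, a condition independent of $x$ and $y$ that yields the last clause of the statement. For $c_{q_l}$, equating the constant $w_1$-count against $max(q_l) - y$ forces $y = max(q_l) - \Parikh_{w_1[i..i+d-1]}[q_l]$; for $c_{q_r}$, equating against $min(q_r) + y$ forces $y = \Parikh_{w_1[i..i+d-1]}[q_r] - min(q_r)$. These two expressions are precisely the chained equality for $y$, with feasibility requiring $0 \le y \le \bp_2(k)-k$. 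Since no entry depends on $x$, it ranges freely over $0 \le x \le \bp_1(i)-i$. Collecting the three equivalences establishes both directions of the iff.

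The only delicate point I would watch is the consistency of the two determinations of $y$ coming from $c_{q_l}$ and $c_{q_r}$: a priori they need not agree, and when they disagree the lemma correctly reports that no common Abelian factor exists on the two break-point intervals. I do not expect a genuine obstacle, since the whole argument is bookkeeping of how the two slides perturb exactly three Parikh entries; the care needed is merely to keep the $min$ and $max$ quantities straight and to confirm that the blanket assumption quarantines every non-distinguished character throughout both slides.
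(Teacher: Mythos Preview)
Your argument is correct and matches the paper's proof essentially step for step: both observe that the $w_1$ window's Parikh vector is invariant since $c_{p_l}=c_{p_r}$, track the $c_{q_l}$ and $c_{q_r}$ counts in the $w_2$ window as $max(q_l)-y$ and $min(q_r)+y$, and equate the three distinguished entries to obtain the stated conditions. If anything, you are slightly more explicit than the paper about why $x$ is unconstrained and about the blanket assumption handling the remaining characters.
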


\begin{proof}\label{pro:p-s}
During the slide,
the number of occurrences of $c_{p_l}$~$(=c_{p_r})$ in the window
on $w_1$ does not change.

Assume that there is a common Abelian factor $(w_1[i+x..i+x+d-1], w_2[k+y..k+y+d-1])$.
Clearly $0 \leq x \leq \bp_1(i)-i$ and $0 \leq y \leq \bp_2(k)-k$.
Then, it holds that
$\Parikh_{w_2[k+y..k+y+d-1]}[q_l] = max(q_l) - y = \Parikh_{w_1[i+x..i+x+d-1]}[q_l]$,
$\Parikh_{w_2[k+y..k+y+d-1]}[q_r] = min(q_r) + y = \Parikh_{w_1[i+x..i+x+d-1]}[q_r]$ and
$\Parikh_{w_2[k+y..k+y+d-1]}$ $[p_l] =  \Parikh_{w_1[i+x..i+x+d-1]}[p_l]$,
that is,
$0 \leq y = max(q_l) - \Parikh_{w_1[i..i+d-1]}[q_l] = \Parikh_{w_1[i..i+d-1]}[q_r] - min(q_r)$ and $\Parikh_{w_1[i..i+d-1]}[p_l] = \Parikh_{w_2[k..k+d-1]}[p_l]$.

Assume $y = max(q_l) - \Parikh_{w_1[i..i+d-1]}[q_l] = \Parikh_{w_1[i..i+d-1]}[q_r] - min(q_r)$ and $\Parikh_{w_1[i..i+d-1]}[p_l] = \Parikh_{w_2[k..k+d-1]}[p_l]$.
Then, $\Parikh_{w_2[k..k+d-1]}[q_l] - y = \Parikh_{w_1[i..i+d-1]}[q_l]$ and $\Parikh_{w_2[k..k+d-1]}[q_r]+y = \Parikh_{w_1[i..i+d-1]}[q_r]$, 
that is, $\Parikh_{w_2[k+y..k+y+d-1]}[q_l] = \Parikh_{w_1[i..i+d-1]}[q_l]$ and $\Parikh_{w_2[k+y..k+y+d-1]}[q_r] = \Parikh_{w_1[i..i+d-1]}[q_r]$.
Therefore, a pair $(w_1[i+x..i+x+d-1], w_2[k+y..k+y+d-1])$ is a common Abelian factor of length $d$ of $w_1$ and $w_2$.
\end{proof}

\begin{lemma}\label{lem:q-s}
Assume $c_{p_l} \neq c_{q_l} = c_{q_r} \neq c_{p_r}$ and $c_{p_l} \neq c_{p_r}$.
There is a common Abelian factor $(w_1[i+x..i+x+d-1], w_2[k+y..k+y+d-1])$ of length $d$ iff $0 \leq y \leq \bp_2(k)-k$ and $x = max(p_l) - \Parikh_{w_2[k..k+d-1]}[p_l] = \Parikh_{w_2[k..k+d-1]}[p_r] - min(p_r) \geq 0$.
\end{lemma}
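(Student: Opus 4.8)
The plan is to exploit the same window-behaviour analysis used in the preceding lemmas, observing that this case is the exact mirror image of Lemma~\ref{lem:p-s} under the symmetry exchanging $w_1\leftrightarrow w_2$, $p\leftrightarrow q$, and $x\leftrightarrow y$. First I would record the two structural facts that drive everything. Because $c_{q_l}=c_{q_r}$, sliding the window on $w_2$ removes one occurrence of this character from the left and inserts one on the right, so $\Parikh_{w_2[k+y..k+y+d-1]}=\Parikh_{w_2[k..k+d-1]}$ for every $0\le y\le \bp_2(k)-k$; the $w_2$-window is effectively frozen. Because $c_{p_l}\neq c_{p_r}$ and, by the definition of $\bp_1(i)$, the slide stays inside a single RLE run on each side, sliding the window on $w_1$ by $x$ gives $\Parikh_{w_1[i+x..i+x+d-1]}[p_l]=max(p_l)-x$ and $\Parikh_{w_1[i+x..i+x+d-1]}[p_r]=min(p_r)+x$, while every other coordinate is unchanged.

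For the ($\Rightarrow$) direction I would assume $(w_1[i+x..i+x+d-1],w_2[k+y..k+y+d-1])$ is a common Abelian factor, so the two Parikh vectors agree coordinate-wise. Equating the $c_{p_l}$-coordinates and using that the $w_2$-window is frozen yields $max(p_l)-x=\Parikh_{w_2[k..k+d-1]}[p_l]$; equating the $c_{p_r}$-coordinates yields $min(p_r)+x=\Parikh_{w_2[k..k+d-1]}[p_r]$. Solving each for $x$ gives the chained equality $x=max(p_l)-\Parikh_{w_2[k..k+d-1]}[p_l]=\Parikh_{w_2[k..k+d-1]}[p_r]-min(p_r)$, and $x\ge 0$ together with the range bounds follow from $0\le x\le \bp_1(i)-i$ and $0\le y\le \bp_2(k)-k$.

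For the ($\Leftarrow$) direction I would take the stated value of $x$ and any $y$ in range and verify equality of $\Parikh_{w_1[i+x..i+x+d-1]}$ and $\Parikh_{w_2[k+y..k+y+d-1]}$ coordinate by coordinate. Coordinates outside $\{p_l,p_r,q_l,q_r\}$ match by the standing assumption in force throughout these lemmas. The $c_{p_l}$- and $c_{p_r}$-coordinates match by the definition of $x$ together with the frozen $w_2$-window. The only remaining coordinate is that of the shared character $c_{q_l}=c_{q_r}$, which is unchanged in both windows during their slides; it therefore matches precisely when $\Parikh_{w_1[i..i+d-1]}[q_l]=\Parikh_{w_2[k..k+d-1]}[q_l]$, exactly the mirror of the condition $\Parikh_{w_1[i..i+d-1]}[p_l]=\Parikh_{w_2[k..k+d-1]}[p_l]$ appearing in Lemma~\ref{lem:p-s}.

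The main obstacle is essentially bookkeeping: correctly identifying which of the three distinct characters $c_{p_l}$, $c_{p_r}$, $c_{q_l}(=c_{q_r})$ move in which window, and checking that the two equations for $x$ are simultaneously satisfiable, which they are since $\Parikh_{w_1[i..i+d-1]}$ and $\Parikh_{w_2[k..k+d-1]}$ already agree off the special coordinates. I would also be careful to carry the upper bound $x\le \bp_1(i)-i$ and the shared-character matching condition explicitly, so that, as in the other cases, the entire family of valid pairs is captured by a single value of $x$ together with an interval of admissible $y$, giving the $O(1)$-size description per break-point pair on which the overall $O(m^2n)$ bound depends.
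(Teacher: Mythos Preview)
Your proposal is correct and takes essentially the same approach as the paper: the paper's entire proof is the sentence ``Lemma~\ref{lem:q-s} can be proved by a similar argument to the proof of Lemma~\ref{lem:p-s},'' and you carry out exactly that mirror argument in detail. You also correctly flag that the statement as printed omits the upper bound $x\le \bp_1(i)-i$ and the matching condition $\Parikh_{w_1[i..i+d-1]}[q_l]=\Parikh_{w_2[k..k+d-1]}[q_l]$ on the shared character $c_{q_l}=c_{q_r}$, both of which are present in the mirror Lemma~\ref{lem:p-s} and are needed for the iff to hold.
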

Lemma~\ref{lem:q-s} can be proved by a similar argument to the proof of Lemma~\ref{lem:p-s}.

\begin{lemma}\label{lem:backslash}
Assume $c_{p_r} \neq c_{p_l} = c_{q_r} \neq c_{q_l}$ and $c_{p_r} \neq c_{q_l}$.
There is a common Abelian factor $(w_1[i+x..i+x+d-1], w_2[k+y..k+y+d-1])$ of length $d$ iff $0 \leq x = \Parikh_{w_2[k..k+d-1]}[p_r] - min(p_r) \leq \bp_1(i)-i$, $0 \leq y = max(q_l) - \Parikh_{w_1[i..i+d-1]}[q_l] \leq \bp_2(k)-k$ and $x+y = \Parikh_{w_1[i..i+d-1]}[p_l] - \Parikh_{w_2[k..k+d-1]}[q_r] = max(p_l) - min(q_r)$.
\end{lemma}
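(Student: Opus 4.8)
The plan is to follow the same template as the proofs of Lemmas~\ref{lem:left}, \ref{lem:right}, and \ref{lem:q-s}: track how each relevant character's count changes as the two windows slide, then read off the matching conditions. Here the configuration is the ``backslash'' one, in which $c_{p_l} = c_{q_r}$ (call this shared character $A$), while $c_{p_r}$ (call it $B$) and $c_{q_l}$ (call it $C$) are distinct from $A$ and from each other. The key structural fact is that $A$ appears on both sides but with opposite sign: sliding the window on $w_1$ by $x$ \emph{removes} $x$ copies of $A$ from the left, whereas sliding the window on $w_2$ by $y$ \emph{adds} $y$ copies of $A$ on the right.

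First I would record the four elementary updates. Sliding on $w_1$ by $x$ gives $\Parikh_{w_1[i+x..i+x+d-1]}[p_l] = max(p_l) - x$ and $\Parikh_{w_1[i+x..i+x+d-1]}[p_r] = min(p_r) + x$, while the count of $c_{q_l} = C$ in the $w_1$-window is unchanged because $C \neq c_{p_l}, c_{p_r}$. Symmetrically, sliding on $w_2$ by $y$ gives $\Parikh_{w_2[k+y..k+y+d-1]}[q_l] = max(q_l) - y$ and $\Parikh_{w_2[k+y..k+y+d-1]}[q_r] = min(q_r) + y$, while the count of $c_{p_r} = B$ in the $w_2$-window is unchanged because $B \neq c_{q_l}, c_{q_r}$. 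For every character outside $\{A, B, C\}$ the counts are equal by the standing assumption and never move, so they cannot obstruct equality.

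Next I would equate the counts of $A$, $B$, and $C$ between the two windows. Since $B$ only moves on the $w_1$-side, equality for $B$ forces $min(p_r) + x = \Parikh_{w_2[k..k+d-1]}[p_r]$, i.e.\ $x = \Parikh_{w_2[k..k+d-1]}[p_r] - min(p_r)$; symmetrically $C$ only moves on the $w_2$-side and forces $y = max(q_l) - \Parikh_{w_1[i..i+d-1]}[q_l]$. For the shared character $A$, equating $\Parikh_{w_1[i+x..i+x+d-1]}[p_l] = max(p_l) - x$ with $\Parikh_{w_2[k+y..k+y+d-1]}[q_r] = min(q_r) + y$ yields the coupled relation $x + y = max(p_l) - min(q_r) = \Parikh_{w_1[i..i+d-1]}[p_l] - \Parikh_{w_2[k..k+d-1]}[q_r]$. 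Together with the range constraints $0 \leq x \leq \bp_1(i)-i$ and $0 \leq y \leq \bp_2(k)-k$, these are exactly the three stated conditions, giving ($\Rightarrow$). For ($\Leftarrow$) I would substitute the prescribed $x$ and $y$ back into the four updates and verify that $A$, $B$, and $C$ all match (the remaining characters matching by assumption), so the pair is a common Abelian factor.

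The only subtle point—and the main obstacle—is the sign bookkeeping for the shared character $A = c_{p_l} = c_{q_r}$. Unlike the one-sided characters $B$ and $C$, whose equalities pin $x$ and $y$ down independently, $A$ ties $x$ and $y$ together through their \emph{sum}, because it leaves on the $w_1$-side but enters on the $w_2$-side. I would double-check that the coupling is $x+y$ rather than $x-y$ (contrasting with the $c_{p_l}=c_{q_l}$ case of Lemma~\ref{lem:II}), and confirm that the $x+y$ relation is consistent with the independently determined values of $x$ and $y$; when it is not, no such factor exists, which the conjunction of the three equalities correctly excludes.
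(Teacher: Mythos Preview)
Your proposal is correct and follows essentially the same approach as the paper's proof: both identify that $c_{q_l}$ is fixed in the $w_1$-window and $c_{p_r}$ is fixed in the $w_2$-window, use those two equalities to pin down $x$ and $y$ independently, and then derive the coupled $x+y$ relation from the shared character $c_{p_l}=c_{q_r}$. Your naming of the three characters $A,B,C$ and your explicit remark on why the coupling is $x+y$ rather than $x-y$ make the argument somewhat clearer than the paper's version, but the underlying logic is identical.
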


\begin{proof}\label{pro:backslash}
During the slides of the windows,
the number of occurrences of $c_{q_l}$ in the window on $w_1$
and that of $c_{p_r}$ in the window on $w_2$ do not change.

Assume there is a common Abelian factor $(w_1[i+x..i+x+d-1], w_2[k+y..k+y+d-1])$.
Then,
$\Parikh_{w_1[i..i+d-1]}[q_l] = \Parikh_{w_2[k+y..k+y+d-1]}[q_l] = max(q_l) - y$,
$\Parikh_{w_2[k..k+d-1]}[p_r] = \Parikh_{w_1[i+x..i+x+d-1]}[p_r] = min(p_r) + x$,
$\Parikh_{w_1[i+x..i+x+d-1]}[p_l] = min(p_l) + x = \Parikh_{w_2[k+y..k+y+d-1]}[q_r] = max(q_r) - y$, 
that is, $y = max(q_l) - \Parikh_{w_1[i..i+d-1]}[q_l]$,
$x = \Parikh_{w_2[k..k+d-1]}[p_r] - min(p_r)$ and 
$x + y = max(p_l) - min(q_r)$.

Assume $y = max(q_l) - \Parikh_{w_1[i..i+d-1]}[q_l]$,
$x = \Parikh_{w_2[k..k+d-1]}[p_r] - min(p_r)$ and 
$x + y = max(p_l) - min(q_r)$.
Then, $y = \Parikh_{w_2[k..k+d-1]}[q_l] - \Parikh_{w_1[i..i+d-1]}[q_l]$,
$x = \Parikh_{w_2[k..k+d-1]}[p_r] - \Parikh_{w_1[i..i+d-1]}[p_r]$ and
$x+y = \Parikh_{w_1[i..i+d-1]}[p_l] - \Parikh_{w_2[k..k+d-1]}[q_r]$,
that is, 
$\Parikh_{w_1[i..i+d-1]}[q_l] = \Parikh_{w_1[i+x..i+x+d-1]}[q_l] = \Parikh_{w_2[k..k+d-1]}[q_l]-y = \Parikh_{w_2[k+y..k+y+d-1]}[q_l]$,
$\Parikh_{w_2[k..k+d-1]}[p_r] = \Parikh_{w_2[k+y..k+y+d-1]}[p_r] = \Parikh_{w_1[i..i+d-1]}$ $[p_r]+x = \Parikh_{w_1[i+x..i+x+d-1]}[p_r]$ and
$\Parikh_{w_1[i..i+d-1]}[p_l]-x = \Parikh_{w_1[i+x..i+x+d-1]}[p_l] = \Parikh_{w_2[k..k+d-1]}[q_r]+y = \Parikh_{w_2[k+y..k+y+d-1]}[q_r]$.
Therefore, a pair $(w_1[i+x..i+x+d-1], w_2[k+y..k+y+d-1])$ is a common Abelian factor of length $d$ of $w_1$ and $w_2$.
\end{proof}

\begin{lemma}\label{lem:slash}
Assume $c_{p_l} \neq c_{q_l} = c_{p_r} \neq c_{q_r}$ and $c_{p_l} \neq c_{q_r}$.
There is a common Abelian factor $(w_1[i+x..i+x+d-1], w_2[k+y..k+y+d-1])$ of length $d$ iff $0 \leq x = max(p_l) - \Parikh_{w_2[k..k+d-1]}[p_l] \leq \bp_1(i)-i$, $0 \leq y = \Parikh_{w_1[i..i+d-1]}[q_r] - min(q_r) \leq \bp_2(k)-k$ and $x+y = \Parikh_{w_2[k..k+d-1]}[q_l] - \Parikh_{w_1[i..i+d-1]}[p_r] = max(q_l) - min(p_r)$.
\end{lemma}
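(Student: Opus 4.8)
The plan is to follow the same template as the preceding lemmas, and in particular to mirror the ``backslash'' case of Lemma~\ref{lem:backslash}, of which this ``slash'' case is the reflected analogue (here the shared character is $c_{q_l}=c_{p_r}$ rather than $c_{p_l}=c_{q_r}$). The starting point is to determine which characters have counts that stay fixed during the two slides. Since the only characters whose window counts change are $c_{p_l}$ (decreasing) and $c_{p_r}$ (increasing) for the window on $w_1$, and $c_{q_l}$ (decreasing) and $c_{q_r}$ (increasing) for the window on $w_2$, the hypotheses $c_{p_l}\neq c_{q_l}$, $c_{p_l}\neq c_{q_r}$, and $c_{q_r}\neq c_{p_r}$ guarantee that the count of $c_{p_l}$ is constant throughout the slide of the window on $w_2$, and the count of $c_{q_r}$ is constant throughout the slide of the window on $w_1$. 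These two invariants are what pin down $x$ and $y$ individually, while the shared character $c_{q_l}=c_{p_r}$ supplies the single coupling equation relating $x$ and $y$.

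Concretely, I would first record the linear behaviour of the changing counts: after sliding $w_1$ by $x$ positions, $\Parikh_{w_1[i+x..i+x+d-1]}[p_l]=max(p_l)-x$ and $\Parikh_{w_1[i+x..i+x+d-1]}[p_r]=min(p_r)+x$, and after sliding $w_2$ by $y$ positions, $\Parikh_{w_2[k+y..k+y+d-1]}[q_l]=max(q_l)-y$ and $\Parikh_{w_2[k+y..k+y+d-1]}[q_r]=min(q_r)+y$. For the forward direction, assume $(w_1[i+x..i+x+d-1],w_2[k+y..k+y+d-1])$ is a common Abelian factor of length $d$. Matching the count of $c_{p_l}$ and using its constancy on $w_2$ gives $max(p_l)-x=\Parikh_{w_2[k..k+d-1]}[p_l]$, hence $x=max(p_l)-\Parikh_{w_2[k..k+d-1]}[p_l]$; symmetrically, matching the count of $c_{q_r}$ and using its constancy on $w_1$ gives $y=\Parikh_{w_1[i..i+d-1]}[q_r]-min(q_r)$; and matching the count of the shared character $c_{q_l}=c_{p_r}$ gives $min(p_r)+x=max(q_l)-y$, i.e.\ $x+y=max(q_l)-min(p_r)=\Parikh_{w_2[k..k+d-1]}[q_l]-\Parikh_{w_1[i..i+d-1]}[p_r]$. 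The range constraints $0\le x\le\bp_1(i)-i$ and $0\le y\le\bp_2(k)-k$ are immediate, since the slides stay between the current positions and their break points.

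For the converse direction, I would take the three stated equalities together with the global assumption that $\Parikh_{w_1[i..i+d-1]}[v]=\Parikh_{w_2[k..k+d-1]}[v]$ for every $v\notin\{p_l,p_r,q_l,q_r\}$, substitute the value of $x$ into the $c_{p_l}$ relation to see that the $c_{p_l}$ counts agree, substitute $y$ into the $c_{q_r}$ relation to see that the $c_{q_r}$ counts agree, and use $x+y=max(q_l)-min(p_r)$ to see that the counts of the shared character coincide; since all remaining characters already agree and are unaffected by the slides, the two windows are Abelian equivalent. The verification is purely arithmetic and parallels Lemma~\ref{lem:backslash} line for line, so I expect no genuine obstacle. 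The only points demanding care are the bookkeeping of which of the four characters is shared (so that the correct invariant is invoked) and confirming that the distinctness hypotheses really do force $c_{p_l}$ and $c_{q_r}$ to be non-changing in the opposite window, since it is exactly this that decouples the determination of $x$ from that of $y$.
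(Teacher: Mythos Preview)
Your proposal is correct and follows exactly the approach the paper intends: the paper does not supply an independent proof of this lemma but simply states that it ``can be proved by a similar argument to the proof of Lemma~\ref{lem:backslash}'', and your write-up carries out precisely that symmetric argument, correctly identifying that $c_{p_l}$ is constant in the $w_2$-window and $c_{q_r}$ is constant in the $w_1$-window, which pins down $x$ and $y$, while the shared character $c_{q_l}=c_{p_r}$ supplies the coupling equation $x+y=max(q_l)-min(p_r)$.
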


Lemma~\ref{lem:slash} can be proved by a similar argument to the proof of Lemma~\ref{lem:backslash}.

\begin{theorem}
  Given two strings $w_1$ and $w_2$, we can compute
  an $O(m^2)$-size representation of all
  longest common Abelian factors of $w_1$ and $w_2$
  in $O(m^2n)$ time with $O(\sigma)$ working space,
  where $m$ and $n$ are the total size of the RLEs and the total length
  of $w_1$ and $w_2$, respectively.
\end{theorem}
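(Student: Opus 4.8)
The plan is to fix each window length $d \in [1, n]$ and, exactly as in the earlier sections, slide a single window of length $d$ over each of $\RLE(w_1)$ and $\RLE(w_2)$, advancing only to the break points $\bp_1$ and $\bp_2$. Since each string has $O(m)$ RLE boundaries, there are $O(m)$ distinct break regions of window positions on $w_1$ and $O(m)$ on $w_2$, and within each such region the two boundary characters $(c_{p_l}, c_{p_r})$ (resp. $(c_{q_l}, c_{q_r})$) remain fixed while the window's Parikh vector changes only in these boundary entries. For each of the $O(m^2)$ pairs of break regions I would first verify the precondition stated before Lemma~\ref{lem:all} — that $\Parikh_{w_1[i..i+d-1]}$ and $\Parikh_{w_2[k..k+d-1]}$ agree on every character outside $\{p_l, p_r, q_l, q_r\}$ — and then branch on the coincidence pattern of the four boundary characters, applying exactly one of Lemmas~\ref{lem:all}, \ref{lem:II}, \ref{lem:left}, \ref{lem:right}, \ref{lem:x}, \ref{lem:barabara}, \ref{lem:p-s}, \ref{lem:q-s}, \ref{lem:backslash}, and \ref{lem:slash}. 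One first checks that these ten cases are mutually exclusive and jointly exhaustive over all ways the four characters can coincide, so that every pair of regions falls into exactly one case; the chosen lemma then yields in $O(1)$ time the (possibly empty) solution set of offsets $(x, y)$ for which $(w_1[i+x..i+x+d-1], w_2[k+y..k+y+d-1])$ is a common Abelian factor, described as a single linear relation between $x$ and $y$ together with the interval bounds $0 \leq x \leq \bp_1(i)-i$ and $0 \leq y \leq \bp_2(k)-k$.

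The crux of the time analysis is to perform the precondition test in $O(1)$ amortized time per pair rather than the naive $O(\sigma)$, since $O(\sigma m^2 n)$ would exceed the claimed bound. To this end I would maintain $\Diff(\Parikh_{w_1[i..i+d-1]}, \Parikh_{w_2[k..k+d-1]})$ incrementally: fixing $d$, iterate $k$ over the break regions of $w_2$ in an outer loop and $i$ over those of $w_1$ in an inner loop. Advancing $i$ to the next break region alters $\Parikh_{w_1[i..i+d-1]}$ in only the two entries $p_l, p_r$ of the current region, so $\Diff$ updates in $O(1)$; advancing $k$ likewise touches only two entries of the $w_2$ window. At the start of each outer iteration I reset $i$ to the first $w_1$-region and recompute $\Diff$ from scratch in $O(\sigma)$ time. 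Given the running value of $\Diff$, the precondition holds iff $\Diff$ equals the number of the at most four special characters on which the two vectors actually differ — an $O(1)$ test, since $\Diff > 4$ immediately rules the pair out. Hence each pair costs $O(1)$, each outer iteration costs $O(\sigma + m) = O(m)$, each $d$ costs $O(m^2)$, and the grand total over all $d$ is $O(m^2 n)$.

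For the output, at the optimal length $d = l$ each applicable lemma already expresses its solution set as one linear equation in $(x, y)$ intersected with the two offset ranges, so a single such description per pair of break regions suffices; summing over the $O(m^2)$ pairs gives the claimed $O(m^2)$-size representation of all longest common Abelian factors. To locate $l$ itself I would run the sweep above for every $d$ and retain the largest $d$ yielding a non-empty solution set, together with its per-pair descriptions. The working space is $O(\sigma)$: besides the input RLEs, the algorithm keeps only a constant number of Parikh vectors of length $\sigma$ and the scalar $\Diff$, and all break points are regenerated on the fly by tracking pointers into the RLEs rather than storing the length-$n$ array $\Succ$. I expect the main obstacle to be precisely this amortized maintenance of $\Diff$ and the bookkeeping that collapses the precondition into an $O(1)$ check; the ten-way case analysis is routine given the preceding lemmas, but care is needed to confirm the cases partition all configurations of the four boundary characters.
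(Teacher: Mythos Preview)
Your approach is essentially the paper's: for each $d$, a double loop over the $O(m_1)$ break regions of $w_1$ and the $O(m_2)$ break regions of $w_2$, dispatching on the coincidence pattern of $(c_{p_l},c_{p_r},c_{q_l},c_{q_r})$ via Lemmas~\ref{lem:all}--\ref{lem:slash}, with the $O(m^2)$ output bound coming from one constant-size description per region pair at $d=l$. Where you go beyond the paper is in actually explaining how the precondition preceding Lemma~\ref{lem:all} (agreement of the two Parikh vectors outside $\{p_l,p_r,q_l,q_r\}$) is tested in $O(1)$ amortised time by maintaining $\Diff$ incrementally across the inner loop and rebuilding it in $O(\sigma)$ at each outer step; the paper's proof simply asserts that ``all the conditions in Lemmas~\ref{lem:all}--\ref{lem:slash} can be tested in $O(1)$ time'' and never addresses this precondition, so your treatment is in fact more complete on the point you correctly flag as the crux.

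One caution on the step you mark as routine: when you verify that the ten lemmas are jointly exhaustive, you will find they are not. The four configurations in which exactly three of $c_{p_l},c_{p_r},c_{q_l},c_{q_r}$ coincide (e.g.\ $c_{p_l}=c_{p_r}=c_{q_l}\neq c_{q_r}$) satisfy none of the stated hypotheses. These are easy degenerate cases --- one window's Parikh vector is constant over its region, so matching reduces to a single linear equation in the other offset --- but you will need to add them rather than merely cite the existing lemmas.
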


\begin{proof}
The correctness follows from Lemmas~\ref{lem:all}--\ref{lem:slash}.

Let $m_1, m_2$ be the sizes of $\RLE(w_1)$ and $\RLE(w_2)$,
respectively.
Let $n_{\min} = \min\{|w_1|, |w_2|\}$.
For each fixed window size $d$,
the window for $w_1$ shifts over $w_1$ $O(m_1)$ times.
For each shift of the window for $w_1$,
the window for $w_2$ shifts over $w_2$ $O(m_2)$ times.
Thus, we have $O(m_1 \cdot m_2 \cdot n_{\min})$ total shifts.
Since all the conditions in Lemmas~\ref{lem:all}--\ref{lem:slash}
can be tested in $O(1)$ time each by simple arithmetic,
the total time complexity is $O(m_1 m_2 n_{\min} + n)$,
where the $n$ term denotes the cost to compute $\RLE(w_1)$ and $\RLE(w_2)$.
Thus, it is clearly bounded by $O(m^2n)$.
Next, we focus on the output size.
Let $l$ be the length of the longest common Abelian factors
of $w_1$ and $w_2$.
Using Lemmas~\ref{lem:II}--\ref{lem:slash},
for each pair of the shifts of the two windows
we can compute an $O(1)$-size representation of the longest
common Abelian factors found.
Since there are $O(m_1 \cdot m_2)$ shifts for window length $l$,
the output size is bounded by $O(m^2)$.
The working space is $O(\sigma)$,
since we only need to maintain two Parikh vectors
for the two sliding windows.
\end{proof}

\subsection{Example for Computin Longest Common Abelian facotors using RLEs}

We show an example of how our algorithm computes a common Abelian factor
of length $4$ for two input strings $w_1 = aaaaacbbbcc$ and $w_2 = cccaaccbbbb$.

\begin{figure}[!h]
 \begin{minipage}{0.45\hsize}
  \begin{center}
   \includegraphics[width=50mm]{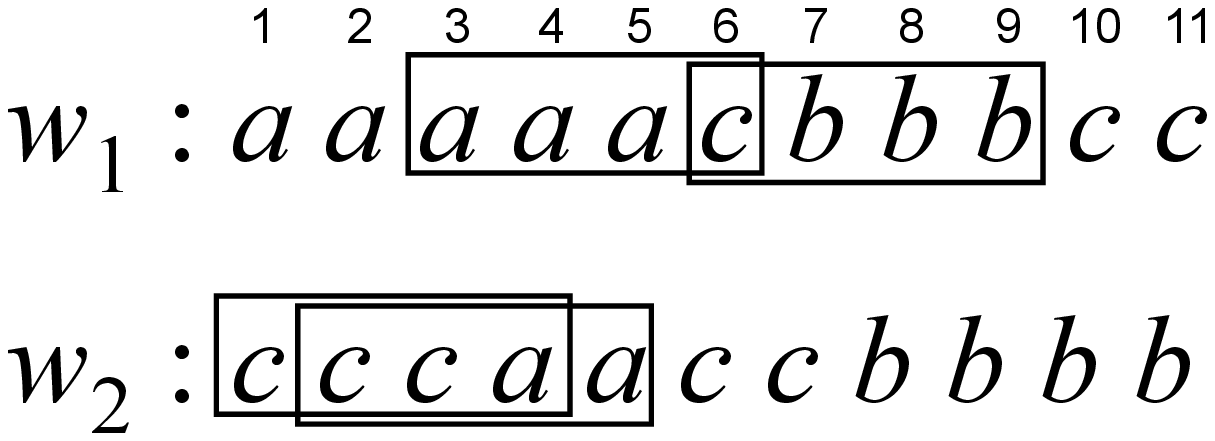}
  \end{center}
  \caption{Showing two sliding windows of length $d=4$,
    where $i=3$, $\bp_1(i)=6$, $k=1$, $\bp_2(k)=2$, $c_{p_l}=a$, $c_{p_r}=b$, $c_{q_l}=c$, $c_{q_r}=a$.}
  \label{fig:lcaf1}
 \end{minipage}
 \hfill
 \begin{minipage}{0.45\hsize}
  \begin{center}
   \includegraphics[width=50mm]{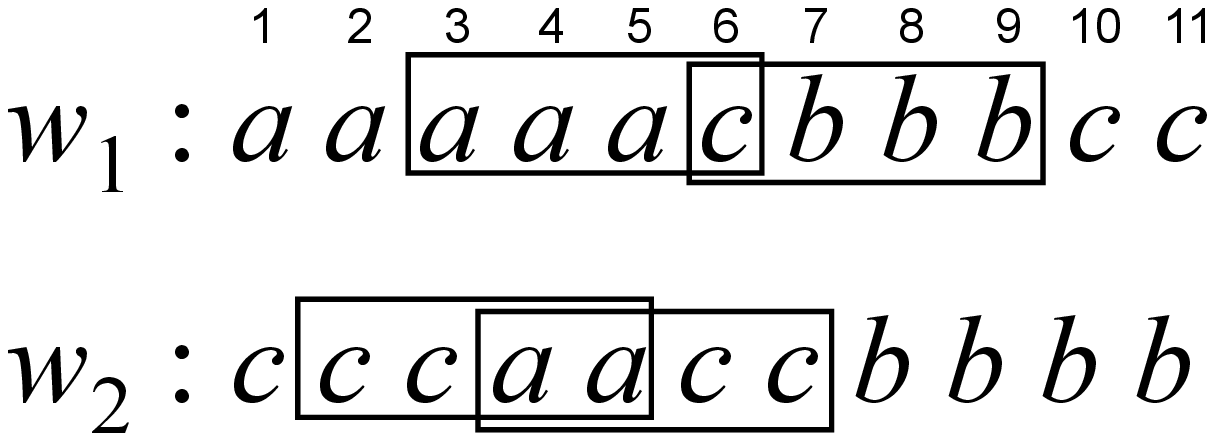}
  \end{center}
  \caption{Showing two sliding windows of length $d=4$,
    where $i=3$, $\bp_1(i)=6$, $k=2$, $\bp_2(k)=4$, $c_{p_l}=a$, $c_{p_r}=b$, $c_{q_l}=c$, $c_{q_r}=c$.}
  \label{fig:lcaf2}
 \end{minipage}
\end{figure}

Suppose that the window for $w_1$ is now aligned at
position $3$ of $w_1$ (namely $U_3 = w_1[3..6] = aaac$).
We then shift it to position $\bp_1(3) = 6$
(namely $U_6 = w_1[6..9] = cbbb$).
For this shift of the window on $w_1$,
we test $O(m_2)$ shifts of the window over the second string $w_2$,
as follows.

We begin with position $1$ of the other string $w_2$
(namely $V_1 = w_2[1..4] = ccca$),
and shift the window to position $\bp_2(1) = 2$.
See also Figure~\ref{fig:lcaf1}.
It follows from Lemma~\ref{lem:backslash} that 
there is no common Abelian factor during these slides.
We move on to the next step.

Next, the window for $w_2$ is shifted from position $2$
to position $\bp_2(2) = 4$ (namely, $V_4 = w_2[4..7] = aacc$).
See also Figure~\ref{fig:lcaf2}.
It follows from Lemma~\ref{lem:q-s} that
there is no common Abelian factor during the slides.
We move on to the next step.

\begin{figure}[!h]
 \begin{minipage}{0.45\hsize}
  \centering{
   \includegraphics[width=50mm]{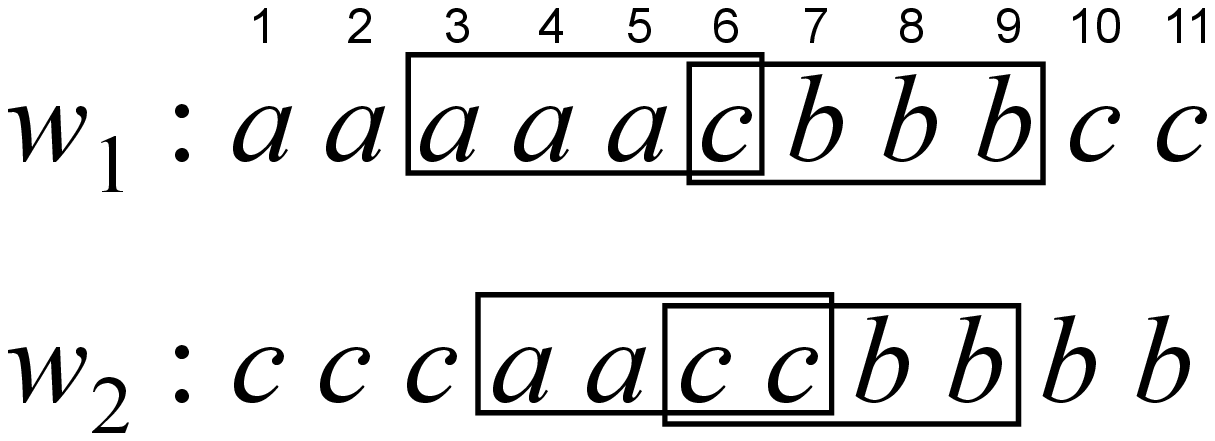}
   \caption{Showing two sliding windows of length $d=4$,
     where $i=3$, $\bp_1(i)=3$, $k=4$, $\bp_2(k)=6$, $c_{p_l}=a$, $c_{p_r}=b$, $c_{q_l}=a$, $c_{q_r}=b$.}
  }
  \label{fig:lcaf3}
 \end{minipage}
 \hfill
 \begin{minipage}{0.45\hsize}
  \centering{
   \includegraphics[width=50mm]{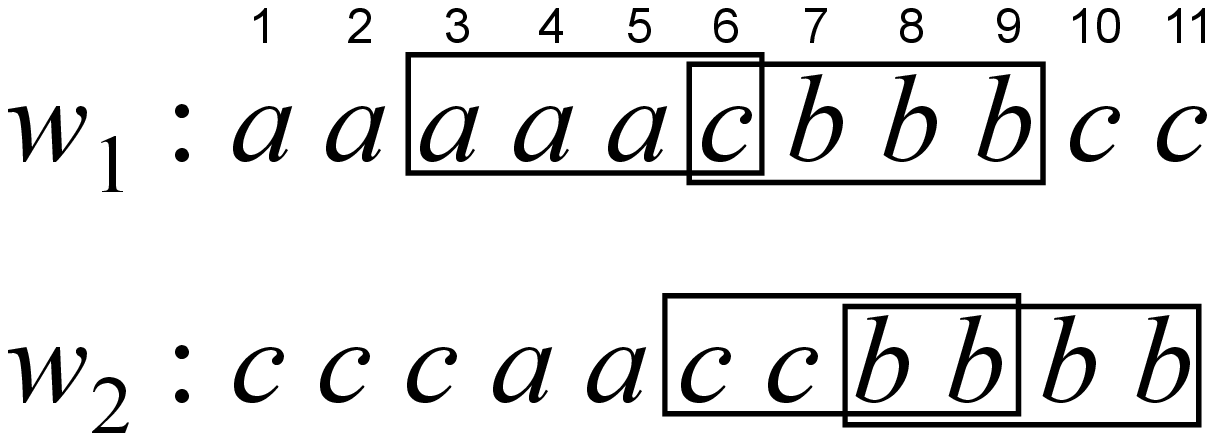}
   \caption{Showing two sliding windows of length $d=4$,
     where $i=3$, $\bp_1(i)=3$, $k=6$, $\bp_2(k)=3$, $c_{p_l}=a$, $c_{p_r}=b$, $c_{q_l}=c$, $c_{q_r}=b$.}
  }
  \label{fig:lcaf4}
 \end{minipage}
\end{figure}

Next, the window for $w_2$ is shifted from position $4$
to position $\bp_2(4) = 6$ (namely, $V_6 = w_2[6..9] = ccbb$).
See also Figure~\ref{fig:lcaf3}.
Since the numbers of occurrences of $c$ on $w_1$ and $w_2$ are different
and $c$ is not equal to $a$ or $b$,
there is no common Abelian factor during the slides.
We move on to the next step.

Next, the window for $w_2$ is shifted from position $6$
to position $\bp_2(6) = 8$.
See Figure~\ref{fig:lcaf4}.
It follows from Lemma~\ref{lem:right} that 
there is a common Abelian factor $(w_1[6..9], w_2[7..10])$
of length $d = 4$.

\bibliographystyle{abbrv}
\bibliography{ref}

\end{document}